\newcommand{\spec}{{\rm spec}}
\newcommand{\real}{{\mathbb R}}
\newcommand{\complex}{{\mathbb C}}
\newcommand{\torus}{{\mathbb T}}
\newcommand{\integer}{{\mathbb Z}}
\newcommand{\quat}{{\mathbb H}}
\newcommand{\dual}[1]{{\widehat{#1}}}
\newcommand{\alg}{{\cal A}}
\newcommand{\cb}{{\cal B}}
\newcommand{\cC}{{\cal C}}
\newcommand{\ce}{{\cal E}}
\newcommand{\ci}{{\cal I}}
\newcommand{\hilb}{{\cal H}}
\newcommand{\cT}{{\cal T}}
\newcommand{\cU}{{\cal U}}
\newcommand{\cpt}{{\cal K}}
\newcommand{\ind}{{\rm ind}}
\newcommand{\aut}{{\rm Aut}}
\newcommand{\wt}[1]{\widetilde{#1}}
\newcommand{\wh}[1]{\widehat{#1}}
\newcommand\rmap[1]{\stackrel{#1}\longrightarrow}
\newcommand\lmap[1]{\stackrel{#1}\longleftarrow}
\newcommand\lumap[1]{\vcenter{\llap{\scriptsize$#1$}}\uparrow}
\newcommand\rdmap[1]{\downarrow\vcenter{\rlap{\scriptsize$#1$}}}
\newcommand\ldmap[1]{\vcenter{\llap{\scriptsize$#1$}}\downarrow}
\newcommand\lnemap[1]{\vcenter{\llap{\scriptsize$#1$}}\nearrow}
\newcommand\rnemap[1]{\nearrow\vcenter{\rlap{\scriptsize$#1$}}}
\newcommand\lsemap[1]{\vcenter{\llap{\scriptsize$#1$}}\searrow}
\newtheorem{theorem}{Theorem}
\newtheorem{corollary}[theorem]{Corollary}
\newtheorem{definition}[theorem]{Definition}
\begin{document}


\title{T-duality and the bulk-boundary correspondence}

\author{Keith C. Hannabuss\footnote{\tiny{Mathematical Institute, 
AWB, Radcliffe Observatory Quarter,
Woodstock Road, 
Oxford, OX2 6GG, U.K.} {\em email}: kch@balliol.ox.ac.uk}
}

\date{}

\maketitle

\begin{abstract}
String-theoretic T-duality can be exploited to simplify some features  of the bulk -boundary correspondence in condensed  matter theory.
This paper surveys how T-duality links position and momentum space pictures of that correspondence.
\end{abstract}



\section*{Introduction}

This paper reviews how T-duality, borrowed from string theory, simplifies some features  of the bulk -boundary correspondence in condensed  matter theory.
This application of T-duals was first suggested and exploited by Mathai and Thiang \cite{MT1,MT3,MT4}, and was later extended  in joint work with the author \cite{HMT16,HMT17}, and for details of the applications we refer to those papers.

After a  short account of the historical context for this work  in Section 1, Section 2 presents the bulk-boundary correspondence in position space, whilst Section 3 gives the momentum space perspective more suited to electron band theory. In Section 4 Cartier's lattice representation of the canonical commutation relations  motivates the appearance of T-duality. The noncommutative geometric version of this duality is described in more detail in Section 5, and is then shown, in Section 6, to link the position and momentum versions of the bulk-boundary correspondence. Finally there is a brief discussion about whether H-flux,  important in string theoretic T-duality, might also appear in solid state theory. Two Appendices cover relevant background aspects of C$^*$- algebras and noncommutative geometry.

\section{Groups, algebras, and topology in solid state physics}

Symmetries played a crucial role in the rapid evolution of the fledgling solid state theory into its modern form following the 1925--6 discovery of a \lq\lq new quantum theory \rq\rq\ by Heisenberg and Schr\"odinger, \cite{Blo,W27,W28}. Wigner (partly with von Neumann) showed that symmetries preserving transition probabilities must be described by unitary or antiunitary operators providing a projective unitary-antiunitary representation, or, in his terminology, a projective corepresentation $g\mapsto D(g)$ of a symmetry group $G$, with  $D(g)D(h) = \sigma(g,h)D(gh)$, for some $\sigma(g,h) \in \complex$ of modulus 1 \cite{vNW, W59}. Time reversal provides a key example of a symmetry represented antiunitarily.
Further exploiting his insights in the 1950s,  Wigner classified systems into three types \cite{W55,W57,W58}.
Dyson later illuminated this result, by observing that the commuting algebra of an irreducible corepresentation $D$ is, by Schur's Lemma, a real division algebra, and, by Frobenius'  Theorem, this must be $\real$, $\complex$, or $\quat$ (the quaternions),  \cite{D62}. (Wigner's original method  is described in \cite[\S26]{W59}).
Dyson showed further that the same argument applied both to the whole group $G$, and to the normal subgroup $G_u$ of elements $g$ such that $D(g)$ is unitary, and that,  surprisingly, the irreducibles for $G$ and $G_u$ could independently have a commuting algebra $\real$, $\complex$, or $\quat$, so that all nine possibilities:
\begin{eqnarray*}
\real\real, &\real\complex,&\real\quat,\\
\complex\real, &\complex\complex, &\complex\quat,\\
\quat\real,&\quat\complex,&\quat\quat.
\end{eqnarray*}
Finally, he observed that the central $\complex\complex$ possibility could come in two distinct forms, leading to ten classes in all.

It was unclear whether all  ten of Dyson's possibilities could be realised experimentally. However,  there were examples of the three Gaussian matrix ensembles, each also having a chiral version, giving six classes. Then, after three decades of limited activity, Zirnbauer \cite{Z96} with Altland \cite{AZ95,AZ96}  took the idea up again and found four additional Gaussian ensembles based on the study of quantum dots. They also made explicit the precise correspondence between the classes of Gaussian ensembles, and Cartan's classification of symmetric spaces, \cite{Z96,AZ96}, thus hinting at a geometric as well as a group-theoretic classification.
(Three particular classes of  symmetric spaces had appeared in \cite[\S V]{D70}, but with the express regret that \lq\lq a more illuminating\rq\rq\  insight was lacking.)

After that there followed a decade of steadily accelerating activity, with important contributions by numerous authors, culminating Kitaev's synthesis, \cite{K09}, which provided an explicit link to homotopy, K-groups, and symmetric spaces. 
(Surveys doing more justice to the numerous crucial papers are to be found in the review of \cite{HK}, and, from a slightly different perspective, in \cite{SCR}. The reviews by Freed and Moore \cite{FM}, from a topological perspective, and by Prodan and Schulz-Baldes \cite{PS}, from an operator algebraic slant,   also include further developments since Kitaev's paper. Other recent strands are the use of (crystal lattice) equivariant K-theory,  \cite{KS17},
and the topological investigation of Fermi arcs in Weyl semimetals, \cite{MT5,MT6}.)

The summary of the scheme which emerged can be summarised in the following table (named by
Hasan and Kane the \lq\lq Altland--Zirnbauer classification\rq\rq).
\begin{center}
	\begin{tabular}{|c| c c c| c c c c c c c c|}\hline
 Cartan    &$\Theta$ &$\Xi$ &$\Pi$ &1 &2 &3 &4 &5 &6 &7 &8\\ \hline
A  &0 &0 &0  &0 &$\integer$&0 &$\integer$&0 &$\integer$&0 &$\integer$\\
AIII  &0 &0 &0 &$\integer$ &0 &$\integer$&0 &$\integer$&0 &$\integer$ &0\\\hline \hline
AI  &1 &0 &0 &0 &0 &0  &$\integer$ &0 &$\integer_2$ &$\integer_2$ &$\integer$\\
BDI  &1 &1 &1 &$\integer$ &0 &0 &0  &$\integer$ &0 &$\integer_2$ &$\integer_2$\\ 
D &0 &1 &0 &$\integer_2$ &$\integer$ &0 &0 &0  &$\integer$ &0 &$\integer_2$ \\ 
DIII  &-1 &1 &1 &$\integer_2$ &$\integer_2$ &$\integer$ &0 &0 &0  &$\integer$ &0 \\
AII  &-1 &0 &0 &0 &$\integer_2$ &$\integer_2$ &$\integer$ &0 &0 &0  &$\integer$\\
CII  &-1 &-1 &1  &$\integer$ &0 &$\integer_2$ &$\integer_2$ &$\integer$ &0 &0 &0 \\
C  &0 &-1 &0 &0 &$\integer$ &0 &$\integer_2$ &$\integer_2$ &$\integer$ &0 &0 \\
CI  &1 &-1 &1  &0 &0 &$\integer$ &0 &$\integer_2$ &$\integer_2$ &$\integer$ &0\\ \hline
\end{tabular}
\end{center}

\bigskip
The first column in this table gives the Cartan symmetric space corresponding to the physical situation, the next indicates whether time reversal $\Theta$ is absent, 0, or present with $\Theta^2 = \pm 1$, as indicated. The third column indicates particle--hole interchange symmetry with $\Xi^2 = \pm1$, or its absence indicated by 0. The fourth column indicates the presence  or absence of chirality with 1 or 0, respectively. 
It is clear that the first two rows in the table consist simply of repeated copies of the first pair of entries, with these interchanged between the first and second row, whilst the other eight rows  are obtained from the third row by cyclic permutations modulo 8.
The entries $\integer_2$ for presence or absence of e.g. superconducting phase, $\integer$ a quantised  observable such as transverse Hall conductivity in the first row (whilst 0 indicates a default phase of the material.
The periodicities 2 and 8 correspond precisely to those of complex and real K-theories, respectively.

To understand why K-theory could be relevant to condensed matter systems one has only to go back to 
Bloch's pioneering study of periodic systems such as crystals, \cite{Blo}.
Let $V \cong \real^d$ denote the group of spatial translations in $\real^d$, $\wh{V}$ its Pontryagin dual, $L\cong \integer^d$ the subgroup of translations through the crystal lattice, $\wh{L}\cong \torus^d$ its dual, and $L^\perp = \{\xi\in \wh{V}:\xi(\ell) = 1 \forall \ell \in L \}$ the reciprocal lattice.
For the rest of this section we concentrate on $d=3$.
Spatial translations through the crystal lattice must commute with a Hamiltonian $H = P^2/(2m) + \Phi(Q)$ with periodic potential $\Phi$.
The action of the lattice group $L \cong \integer^d$ of translations on the Hilbert space $\hilb\cong L^2(V)$ of the quantum system provides a direct integral decomposition $\hilb = \int^\oplus \hilb(k) dk$ where $T(\ell)$ acts on $\hilb(k)$  as multiplication by $\chi_k(\ell)$ for $\chi_k \in \wh{L}$. (Physicists tend to think of $k$ as an element of $\wh{V}$ which is periodic with respect to the reciprocal lattice $L^\perp$, as follows from the isomorphism $\wh{L} \cong \wh{V}/L^\perp$, and write $\chi_k(\ell) = \exp(ik.\ell)$, where the isomorphic vector groups $\wh{V}$ and $V$ are identified and given the usual Euclidean inner product denoted by a dot.  Sometimes $\wh{L}$ is considered a Brillouin zone, \cite{TKNN}, though the Brillouin zones are usually introduced as subsets of $\wh{V}$  which project  onto $\wh{L} = \wh{V}/L^\perp$, and which, together, cover $\wh{V}$,  up to null sets.  The first Brillouin zone consists of elements of $\wh{V}$ which are closer to 0  than to any other lattice point $\lambda \in L^\perp$  in the Euclidean norm metric. Bellissard's noncommutative Brillouin zone is a crossed product C$^*$-algebra, which allows for disorder, \cite{B,BES}.)

General features can be illuminated within each space $\hilb(k)$, where, as Bloch realised, they become much simpler. In particular, the spectrum of the restriction $H(k)$ of the Hamiltonian to $\hilb(k)$ is discrete, and the ground state energy $E_0(k)$ is nondegenerate, \cite[Vol 1, Ch VI.6,7]{CH}, \cite[Ch XIII.16, Th. XIII.89]{RS}, and a lower bound for the gap $E_1 - E_0$ between the ground state and first excited state energy is known, \cite[Th.2.1]{KS}. It is also known that $\hilb(k)$ depends continuously on $k$, giving a band struture to the energy levels. Since only a finite number of energy levels lie below the Fermi level we can assume that $\hilb(k)$ is finite-dimensional, and $\hilb(k)$  can be regarded as  a finite rank vector bundle over the parameter space. As a theory of certain equivalence classes of vector bundles K-theory is an obvious tool in the study of these systems.

(Kasparov's KK-theory is more powerful and more easily adapts to the use of the real theory rather than ordinary K-theory, \cite{KK88}, and unbounded KK theory has been used to analyse the case of a boundary circle in momentum space \cite{BCR}. However, the familiarity and simplicity of  K-theory is easier to motivate and we shall use it  here.)

\begin{center}
\begin{picture}(250,160)(0,-20)
\put(50,0){\line(1,0){100}}
\put(50,0){\line(3,4){60}}
\put(110,80){\line(1,0){100}}
\put(150,0){\line(3,4){60}}
\put(50,-20){\line(1,0){100}}
\put(150,-20){\line(3,4){60}}
\put(50,-20){\line(0,1){20}}
\put(150,-20){\line(0,1){20}}
\put(210,60){\line(0,1){20}}
\put(90,48){\vector(1,0){90}}
\put(140,55){\makebox(0,0)[l]{transverse}}
\put(135,40){\makebox(0,0)[l]{current}}
\put(118,26){\vector(3,4){35}}
\put(110,15){\makebox(0,0)[b]{applied}}
\put(110,5){\makebox(0,0)[b]{voltage}}
\put(135,48){\vector(0,1){70}}
\put(140,110){\makebox(0,0)[l]{magnetic}}
\put(140,100){\makebox(0,0)[l]{field ${\bf B}$}}
\end{picture}
\end{center}

A notable early application of K-theory to condensed matter physics was stimulated by the 1981 discovery of the Integer Quantum Hall Effect, where, at sufficiently low temperatures, the transverse conductivity of a two-dimensional system of electrons  in a perpendicular magnetic field is quantised in integer multiples of $e^2/\hbar$, \cite{KDP}.  
Theoretical explanations were soon suggested by Laughlin  \cite{L81}, and then by Thouless and collaborators \cite{TKNN}. The latter suggested that the effect was topological: the transverse conductivity was essentially giving the Chern character of a line bundle over the Brillouin zone, \cite{ASS, NTW}.
This was soon refined further when Bellissard  circumvented some conceptual difficulties in the purely geometric approach using Connes' new noncommutative geometry \cite{C85,C94}. Inspired by Kubo's formula, Bellissard modelled the transverse conductivity as a cyclic 2-cocycle, which paired with a K$_0$ class defined by the Fermi projection to give a noncommutative Chern character \cite{B}.  Moreover, the technical condition for the cyclic cocycle to be well-defined was linked to Anderson localisation, \cite{BES}.
This suggests that one might consider a similar C$^*$-algebraic approach here.

A striking feature of the K-theoretic classification is the way in which it links behaviour in different dimensions.
It had already become clear that most of the important action in a plane sample actually comes from the edges, \cite{L81,MMP}. Starting in 2000, Schulz-Baldes, Kellendonk, and Richter modelled this situation in the noncommutative geometry of the integer quantum Hall effect,  proving that the bulk and edge conductivities were the same, \cite{SKR, KRS, KS1, KS2}. More recent developments include \cite{ASV}.

Topological insulators provide another motive for investigating the relationship between bulk and boundary, as the boundary may not share all the symmetries, particularly discrete symmetries,  of the bulk,  and this can lead to surface electrons bridging the gap between the bulk energy bands.  

Accounts of C$^*$-algebra theory and background motivation can be found in many places, such as \cite{RW}, but we have included a brief account of the ideas needed here in the two Appendices.

%
%

\section{The bulk-boundary correspondence}

The bulk, $B$, and boundary, $E$, appearing in condensed matter systems are subsets of  position space  $\real^d$. In quantum theory we want to consider C$^*$-algebras $\cb$ and $\ce$ whose spectra are $B$ and $E$, respectively.
For the applications we use algebras related by simple constructions which capture the naive idea of the bulk as being a sort of cylinder with the boundary as cross section.
This can be made precise using induced algebras. 

\begin{definition}
Let $H$ be a closed subgroup of a separable locally compact group $G$, and let $\alpha$ be a 
homomorphism from  $H$ to the automorphisms $\aut(\alg)$ of a C$^*$-algebra $\alg$.
The induced algebra $\ind_H^G(\alg,\alpha)$
consists of bounded continuous $\alg$-valued functions on $G$, with a periodicity condition:
$$
\ind_H^G(\alg,\alpha) = \{ f\in C_b(G,\alg): f(gh^{-1}) = \alpha(h)[f(g)], g\in G, h\in H\}.
$$
The star operation, addition, and product are defined pointwise.
\end{definition}

We are primarily interested in the case when the boundary has codimension 1, and the boundary algebra $\cb$ is induced from the edge algebra $\ce$. We let $V_t \cong \real$ be a subgroup of translations transverse to the boundary, and suppose that $L_t = L\cap V_t \cong \integer$ has  an action $\alpha$ as automorphisms of $\ce$ so that we may define
$$
\cb = \ind_{L_t}^{V_t}(\ce,\alpha)= \ind_\integer^\real(\ce,\alpha).
$$
(We do not need to be too specific about $\ce$ at this stage, though we shall impose a little more structure later.)
The periodicity constraint means that functions $f \in \cb  = \ind_\integer^\real(\ce,\alpha)$ are determined by their values on the unit interval $[0,1)$. We shall take the projection $\epsilon:\cb \to \ce$ to be given by evaluation at 0: $\epsilon(f) = f(0) \in \ce$.

The interior algebra is then defined to be
$$
\ci = \ker(\epsilon) = \{ f\in C_b(\real,\ce): f(0) = 0\},
$$
with $\iota:\ci \to \cb$ the obvious inclusion map. 
so that we have the standard short exact sequence
$$
0 \longrightarrow \ci \rmap{\iota} \cb \rmap{\epsilon} \ce \longrightarrow 0.
$$
Since $f(\ell)  = \alpha(\ell)[f(0)]$, the functions in $\ci$ vanish at all integers $\ell$, so that we may think of $\ci = C_0((0,1),\ce)$, where $C_0$ consists of continuous functions vanishing at the endpoints. 

From this sequence of C$^*$-algebras  one gets a corresponding exact sequence of K-groups
$$
K_0(\ci) \rmap{K_0(\iota)} K_0(\cb) \rmap{K_0(\epsilon)} K_0(\ce),
$$
where $K_0(\iota)$ and $K_0(\epsilon)$ are the K-theory homomorphisms defined by $\iota$ and $\epsilon$. 
Using their suspensions, we obtain another similar sequence 
$$
K_1(\ci) \rmap{K_1(\iota)} K_1(\cb) \rmap{K_1(\epsilon)} K_1(\ce).
$$
Since $(0,1)$ is homeomorphic to $\real$, we also have $\ci \cong C_0(\real,\ce)$, the suspension, so that $K_j(\ci) \cong K_{j+1}(\ce)$, a fact that we shall exploit later.

Index maps connect the above two sequences, \cite[\S8.3]{Bl98}, \cite[Ch. 8]{WO}, \cite[\S1.3]{CMR}.
To obtain $\delta_1: K_1(\ce) \to K_0(\ci)$ for unital $\cb$ one chooses $u$ in some GL$_m(\cb)$ representing a class in $K_1$, and lifts it to $\wh{u}$ in GL$_{2m}(\cb)$, for example:
$$
\wh{u} = \left(\matrix{ u_+ &u_+u_- -1\cr 1-u_-u_+ &2u_- - u_-u_+u_-\cr}\right),
$$
where $\epsilon_*$ sends $u_\pm$  to $u^{\pm 1}$,
and  $\wh{u} \mapsto u\oplus u^{-1}$. Then, setting $p_m = 1 \oplus 0$, one finds that $\epsilon_*(\wh{u}p_m\wh{u}^{-1} - p_m) = 0$, so that
$\wh{u}p_m\wh{u}^{-1} - p_m$ is lifted from $\ci$. It actually gives a well-defined class in $K_0(\ci)$.
Exploiting the suspensions and periodicity, one also obtains $\delta_0: K_0(\ce) \to K_0(C_0(\real,\ci)) \cong K_1(\ci)$.
(For the sequence from compact to bounded operators and then to the Calkin algebra, the index map can be interpreted as a Fredholm index.)
These all fit into an associated long exact sequence, \cite{WO}:
\medskip
\begin{equation}
\matrix{K_0(\ci) &\rmap{\iota_0} &K_0(\cb) &\rmap{\epsilon_0}  &K_0(\ce) \cr
 \lumap{\delta_1} & & &  &\rdmap{\delta_0}\cr
K_1(\ce) &\lmap{\epsilon_1} &K_1(\cb) &\lmap{\iota_1} &K_1(\ci) \cr}
\end{equation}
where we have introduced the abbreviations $\iota_j = K_j(\iota)$, $\epsilon_j = K_j(\epsilon)$. 
(Were we dealing with the real K-theory we should have twenty-four instead of six terms in the exact  sequence.
Bott periodicity was originally discovered for the homotopy of classical groups, \cite{B59,Mi}, and in some ways it is more naturally discussed in that context, where it can be given direct physical meaning, \cite{K09, SCR, KG, KZ}, but we shall not pursue this further here.)

In general, the two vertical index maps are usually much more difficult to obtain than the horizontal maps which are induced by the geometry,  but, when we identify   $K_j(\ci) \cong K_{j+1}(\ce)$ using the suspension as above, they have the simple form $1-\alpha_*$,  where the map $\alpha_*$ is induced by the action of  $\alpha(1)$, \cite[Ex 9.K]{WO}.

%
%

\section{The bulk and boundary in momentum space}

The above geometric description captures some features of the bulk boundary problem, but the physics really takes  place in the Brillouin zone in momentum space, and this motivated the analysis in \cite{SKR, KRS}.
Instead of starting with the schematic picture 
$$
\hbox{{\rm interior}} \to \hbox{{\rm bulk}} \to \hbox{{\rm boundary}},
$$
Schulz-Baldes, Kellendonk, and Richter thought rather of attaching the boundary to the bulk:
$$
\hbox{{\rm boundary}} \to \hbox{{\rm \lq\lq glue\rq\rq}} \to \hbox{{\rm bulk}}.
$$
The example of the circle $S^1$ bounding the unit disc ${\mathbb D}$ in $\complex$ provides a candidate for the \lq\lq glue\rq\rq.
The Hardy space $H^2(S^1)$ is the subspace of functions in $L^2(S^1)$ which extend holomorphically into the disc, and there is a projection $P: L^2(S^1) \to H^2(S^1)$ (which plays an important role in the theory of positive energy representations of loop groups). Identifying the boundary circle with the complex numbers $e^{i\theta}$ of modulus 1, the projection $P$ sends the basis function $e_n(\theta) = e^{in\theta}$ to the holomorphic function $z^n$ when $n \geq 0$, and kills it if $n <0$.

The projection $P$ is intimately related to the Toeplitz algebra $\cT$  on $\hilb = \ell^2({\mathbb N})$, generated by the shift operator $S$ such that $(Sf)(0) = 0$, and $(Sf)(k) = f(k-1)$, for $k = 1,2,\ldots$.  
We first recall that $L^2(S^1) \cong \ell^2(\integer)$, and, since $P$ kills the negative terms in a Fourier series the Hardy space $H^2(S^1) \cong \ell^2({\mathbb N})$.
The  pointwise multiplication action of $C(S^1)$ on $L^2(S^1) \cong \ell^2(\integer)$ can be restricted to $H^2(S^1) \cong \ell^2({\mathbb N})$ and multiplication by $e^{i\theta}$ takes $e_n$ to $e_{n+1}$, so that it gives the shift operator $S$. Elaborating the details, we discover that the Toeplitz algebra $\cT$ can be identified with $P\, C(S^1)\, P$.

A slightly generalised Toeplitz algebra,  $\wh{\cT} = \cT(\wh{\ce})$, $\cite{CMR}$, provides the glue in the momentum space theory of Kellendonk, Richter and Schulz-Baldes.
We define $\cT(\wh{\ce})$ by requiring that there is an embedding $j: \wh{\ce} \to  \cT(\wh{\ce})$, an action $\alpha$ of the non-negative integers ${\mathbb N}$ on $\wh{\ce}$ and an isometry $J$ such that $J^*j(a)J = j(\alpha(1)[a])$, for all $a\in \wh{\ce}$, \cite[Prop.58]{CMR}.
There is an exact sequence 
$$
0 \longrightarrow \cpt\otimes\wh{\ce} \rmap{\wh{\iota}} \cT(\wh{\ce}) \rmap{\wh{\epsilon}} \wh{\ce}\rtimes \integer \longrightarrow 0,
$$
where $\cpt = \cpt(\hilb)$ is the C$^*$-algebra of compact operators, the closure of the finite rank operators on a Hilbert space $\hilb$.
When one defines $\wh{\cb}^\prime = \wh{\ce}\rtimes \integer$ to be the bulk algebra in momentum space the exact sequence becomes
$$
0 \longrightarrow \cpt\otimes\wh{\ce} \rmap{\wh{\iota}} \cT(\wh{\ce}) \rmap{\wh{\epsilon}} \wh{\cb}^\prime \longrightarrow 0,
$$
As in the geometric picture, with similar abbreviations of notation, there is  a momentum space long exact sequence of  K-groups:
\begin{equation}
\matrix{K_0(\wh{\ce}) &\rmap{\wh{\iota}_0} &K_0(\wh{\cT}) &\rmap{\wh{\epsilon}_0}  &K_0(\wh{\cb}^\prime) \cr
 \uparrow & & &  &\downarrow\cr
K_1(\wh{\cb}^\prime)  &\lmap{\wh{\epsilon}_1} &K_1(\wh{\cT}) &\lmap{\wh{\iota}_1} &K_1(\wh{\ce}), \cr}
\end{equation}
\cite{PV}, in which we have used the Morita equivalence of $\cpt\otimes\wh{\ce}$ and $\wh{\ce}$. 
This time the index maps are more complicated, \cite[8.1]{WO}, \cite[1.3.2]{CMR}, and we shall now show how T-duality enables us to derive this sequence directly from the simpler geometric sequence, whilst suggesting a slightly different, but Morita equivalent, definition of the bulk algebra. (This is already observed from a mathematical perspective by Rieffel, \cite{Ri}, but T-duality provides a direct link with the physics.)

%
%

\section{The Lattice Representation of the CCR}

As pointed out by Weyl \cite[\S IV.14]{W28}, the representations of the canonical commutation relations for the phase space translations $V\times \wh{V} \cong \real^{d}\times\real^d$ correspond to projective unitary representations of $V\times \wh{V}$.
There are unitary representations $U_0(x) = \exp(ix.P/\hbar)$ and $U_1 = \exp(i\xi.X/\hbar)$ of translations of positions and momenta, respectively, where $P=(P_1,P_2,\ldots,P_d)$ and $X=(X_1,\ldots,X_d)$ are momentum and position operators (so that momenta generate spatial translations).
The Heisenberg commutation relations exponentiate to give the Weyl relations:
$$
U_0(x)U_1(\eta) = e^{i(\eta.x)/\hbar}U_1(\eta)U_0(x).
$$
Setting $U(x,\xi) = U_1(\xi)U_0(x)$, and using this, we arrive at
\begin{eqnarray*}
U(x,\xi)U(y,\eta) &=&U_1(\xi) U_0(x)U_1(\eta)U_0(y)\\
&=& e^{i(\eta.x)/\hbar}U_1(\xi)U_1(\eta)U_0(x)U_0(y)\\
&=& e^{i(\eta.x)/\hbar}U_1(\xi+\eta)U_0(x+y)\\
&=& e^{i(\eta.x)/\hbar}U(x+y,\xi+\eta).
\end{eqnarray*}
Thus $U$ is a projective representation of $V\times \wh{V}$ with multiplier $\sigma((x,\xi),(y,\eta)) 
= e^{-i(\eta.x)/\hbar}$.

Though less familiar than the Schr\"odinger representation of the canonical commutation relations on $L^2(\real^d)$ and the Fock representation,  Cartier's lattice representation on $L^2$ sections of a line bundle over $\torus^{2d} \cong (V\times\wh{V})/(L\times L^\perp)$, induced from the lattice $L\times L^\perp$ in phase space also provides a useful realisation of the unique irreducible representation \cite{C66, M70}.

Mackey's normal subgroup analysis \cite{M58}, \cite[\S 12]{M70} shows that $U$ can be realised as the representation induced from $L\times L^\perp$,
$$
(U(y,\eta)\psi)(x,\xi) = \sigma((x,\xi),(y,\eta))^{-1}\psi(x+y,\xi+\eta)
 = e^{i(\eta.x)/\hbar}\psi(x+y,\xi+\eta),
$$ 
on a Hilbert space of (suitably square-integrable) functions $\psi$ satisfying the equivariance condition
$$
\psi(\ell+x,\lambda+\xi) = \sigma((\ell,\lambda),(x,\xi))\psi(x,\xi)
 = e^{-i\xi.\ell/\hbar}\psi(x,\xi),
 $$
for all $(\ell,\lambda)\in L\times L^\perp$, and $(x,\xi) \in V\times \wh{V}$, which is a Bloch type relation for translations through the lattice with quasimomentum $k=-\xi$.
We can identify $\psi(x,-k)$ with the Bloch wave function $\psi_k(x)$. The modified function
$\phi(x,\xi) := \sigma((x,0),(0,\xi))^{-1}\psi(x,\xi)$ 
is $L$-periodic, in the sense that $\phi(x+\ell,\xi) = \phi(x,\xi)$, giving an abstract version of Bloch's factorisation of the wave function into the product of an exponential and a periodic function. 

Differentiating
$$
(U_0(y)\psi)(x,\xi)  = \psi(x+y,\xi), \qquad
(U_1(\eta)\psi)(x,\xi)  = e^{-i(\eta.x)/\hbar}\psi(x,\xi+\eta),
$$ 
with respect to $y$ and $\eta$,  we find that the generators are $P = -i\hbar\partial_x$, and $X = x+i\hbar\partial_\xi$.
(Inducing from a subgroup of index $N$ in $L\times L^\perp$ gives the direct sum of copies of the irreducible, and the representation acts by matrices, which may be useful in other applications.)

For $d=1$ on $\torus^2$ with periods $\{a,\hbar/a\}$ the interchange of position and momentum is given by a Fourier transform ${\cal F}$.
(Actually the periodicity  means that this is not quite a normal Fourier transform, but it can be given explicitly \cite[\S1.4]{LV}.)
Fourier transforms interchange integration and point evalution, and in this context turn complicated index maps into simpler ones.

\begin{center}
\begin{picture}(200,105)(-5,0)
\put(7,12){\vector(1,0){70}} \put(7,12){\vector(0,1){42}} \put(7,14){\vector(0,1){44}} 
\put(77,12){\vector(0,1){42}}
\put(77,14){\vector(0,1){44}}
\put(7,58){\vector(1,0){70}} \put(-7,32){\makebox(0,0){$\hbar a^{-1}$}}
\put(70,3){\makebox(0,0){$a$}}
\put(152,7){\vector(0,1){70}} 
\put(152,7){\vector(1,0){42}} 
\put(154,7){\vector(1,0){44}} 
\put(152,77){\vector(1,0){42}}
\put(154,77){\vector(1,0){44}}
\put(198,7){\vector(0,1){70}} 
\put(172,-7){\makebox(0,0){$\hbar a^{-1}$}}
\put(143,70){\makebox(0,0){$a$}}
\put(95,40){\vector(1,0){42}}
\put(123,50){\makebox(0,0){${\cal F}$}}
\put(143,70){\makebox(0,0){$a$}}
\put(95,40){\vector(1,0){42}}
\put(123,50){\makebox(0,0){${\cal F}$}}
\end{picture}
\end{center}

Diagrams like this motivate the idea of T-duality in string theory, in which context they describe the interchange of string momentum and winding number.
(The idea goes back to the work of Buscher on symmetries of $\sigma$-models, and was quickly taken up by Hull, Townsend and others, \cite{B1, B2,HT}.)

For a more precise view we note that  multiplication by $\rho = \hbar/a^2$ is an isomorphism sending $L = \integer a$ to $L^\perp = \integer\hbar/a^2$, and $V$ to $\wh{V}$.
We now interchange the $L$ and $L^\perp$ periodicities and define
$$
U_\rho(z,\zeta) = e^{-iz.\zeta/\hbar} U(-\rho^{-1}\zeta,\rho z) = e^{-iz.\zeta/\hbar}\exp\left[ \frac{i}{\hbar}\rho z.X\right]\exp\left[ -\frac{i}{\hbar}(\rho^{-1}\zeta).P\right],
$$
so that up to a factor position and momentum have been interchanged. We can now check that 
\begin{eqnarray*}
U_\rho(y,\eta)U_\rho(z,\zeta) &=& e^{-iy.\eta/\hbar}e^{-iz.\zeta/\hbar} U(-\rho^{-1}\eta,\rho y) U(-\rho^{-1}\zeta,\rho z)\\
&=& e^{-iy.\eta/\hbar}e^{-iz.\zeta/\hbar} e^{-i\eta.z/\hbar}U(-\rho^{-1}(\eta+\zeta),\rho (y+z))\\
&=& e^{-iy.\eta/\hbar}e^{-iz.\zeta/\hbar} e^{-i\eta.z/\hbar}e^{i(y+z).(\eta+\zeta)}U_\rho(y+z,\eta+\zeta)\\
&=& e^{-iy.\zeta/\hbar}U_\rho(y+z,\eta+\zeta),
\end{eqnarray*}
so that $U_\rho$ is also a $\sigma$ representation of $V\times \wh{V}$. By the Stone--von Neumann uniqueness theorem $U_\rho$ and $U$ must be unitarily equivalent. (The map $\psi \mapsto \psi_\rho$ given by $\psi_\rho(x,\xi) = e^{-i\xi.x/\hbar}\psi(-\rho^{-1}\xi,\rho x)$ gives the equivalence.)

The suggestion of Mathai and Thiang that T-duality could also be a useful tool in the study of bulk--boundary problems has been explored in a series of papers, \cite{MT1, MT3,MT4,HMT16,HMT17}.
In condensed matter $a$ would be a crystal bond length, perhaps a quarter of a nanometre, $250\times 10^{-12}$m, whereas in string theory it  would be of order the Planck length $10^{-35}$m.

%
%

\section{T-duality in noncommutative geometry}

The algebras $\wh{\cb}$, $\wh{\cT}$ and $\wh{\ce}$ are associated to momentum space, which is obtained by Fourier transforming position space, and  position-momentum interchange was an analogue of string-theoretic momentum winding number interchange implementing the T-duality.
The original T-duality was geometric, interchanging principal torus bundles with a 3-form H-flux over the same base, but it worked smoothly only for circle bundles, or rather restricted classes of flux, \cite{BEM}. However, it was noticed that one could easily describe some more general situations by moving to noncommutative geometry \cite{MR} (or even nonassociative geometry \cite{BHM06,BHM10}). Particularly interesting are the continuous trace algebras (see Appendix B):
Corresponding to the geometric case of a principal circle bundle $P$ with flux $H$ is the continuous trace algebra CT$(P,H)$, and its T-dual is the crossed product CT$(P,H)\rtimes \real$. This is again a continuous trace algebra of the form CT$(\wh{P},\wh{H})$, for another principal circle bundle with flux over the same base $M$, and it is the algebra corresponding precisely with the geometric T-dual. More generally, a principal $V/L$-bundle $P \to M$ with flux $H \in H^3(P,\integer)$ is replaced by the continuous trace algebra $\alg = {\rm CT}(P,H)$. There is an action $\alpha$ of the  vector group $V$  as automorphisms of $\alg$, and the lattice subgroup $L < V$ has trivial action on the spectrum, so that the $V$ action on the spectrum is simply lifted from an action of the torus $V/L$. The T-dual is then the crossed product $\wh{\alg} = \alg\rtimes_\alpha V$ \cite{RR}, which we now define.

\begin{definition}
Let $V$ be an abelian group and $\alpha: V \to \aut(\alg)$ an action as automorphisms of a C$^*$-algebra $\alg$.
The crossed product algebra $\dual{\alg} = \alg\rtimes_\alpha V$ can be identified with functions $f \in C_0(V,\alg)$  equipped with the $\alpha$-twisted convolution product and star operation:
\begin{eqnarray*}
(f_1*f_2)(v) &=& \int_V f_1(u)\alpha_u[f_2(v-u)]\,du,\\
f^*(v) &=& \alpha_v[f(-v)]^*,
\end{eqnarray*}
where $du$ is the Haar measure on $V$.
\end{definition}

The  Pontryagin dual $\wh{V} = {\rm Hom}(V,\torus) \subset C_b(V,\torus)$ acts by multiplication on the T-dual $\alg\rtimes_\alpha V$, that is, for $\xi\in \wh{V} \subset C_b(V)$, we set $\wh{\alpha}(\xi)[f](v) = \xi(v)f(v)$ and then we have the Takai-Takesaki duality Theorem:

\begin{theorem}  
Writing $\wh{\alpha}: \wh{V} \to \aut(\wh{\alg})$ for the action $\wh{\alpha}(\xi)[f](v) = \xi(v)f(v)$ on $\wh{\alg} = \alg\rtimes_\alpha V$, we have
$$
\dual{\alg}\rtimes_{\wh{\alpha}}\dual{V} \cong \alg\otimes \cpt(L^2(V)),
$$
which is Morita equivalent to $\alg$.
\end{theorem}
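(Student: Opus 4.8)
The plan is to prove this Takai--Takesaki duality by realising everything in concrete regular representations and then untwisting. Since $V$ and $\wh{V}$ are abelian, hence amenable, the full and reduced crossed products coincide, so I am free to work on Hilbert space. Fix a faithful nondegenerate representation $\pi_0$ of $\alg$ on a Hilbert space $\hilb_0$ and realise $\wh{\alg} = \alg\rtimes_\alpha V$ in its regular representation on $\hilb_0\otimes L^2(V)$, generated by the field $\tilde\pi(a)\colon v\mapsto \pi_0(\alpha_{-v}(a))$ and the translations $\lambda(u)$, which satisfy the covariance $\lambda(u)\tilde\pi(a)\lambda(u)^{-1} = \tilde\pi(\alpha_u(a))$.

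First I would check, by a one-line covariance computation, that the dual action is \emph{spatially implemented}: writing $M_\xi$ for multiplication by the character $\xi$ on $L^2(V)$, one has $\wh{\alpha}(\xi) = \mathrm{Ad}(1\otimes M_\xi)$, because conjugation by $1\otimes M_\xi$ fixes $\tilde\pi(\alg)$ and sends $\lambda(u)\mapsto \xi(u)\lambda(u)$. The crucial subtlety to flag here is that $1\otimes M_\xi$ is \emph{not} a multiplier of $\wh{\alg}$ (it fails to normalise the algebra from within), so the second crossed product will \emph{not} collapse to a naive tensor product with $C^*(\wh{V})$; this is exactly why the final answer is the noncommutative algebra $\cpt(L^2(V))$ rather than a commutative factor, and the simplest instance $\alg=\complex$ with trivial action already confirms this.

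Next I would form the second crossed product $\wh{\alg}\rtimes_{\wh\alpha}\wh{V}$ in its regular representation on $\hilb_0\otimes L^2(V)\otimes L^2(\wh{V})$ and apply the spatial untwisting unitary $\Omega\colon \Xi\mapsto [\,\xi\mapsto (1\otimes M_\xi)\Xi(\xi)\,]$ built from the implementing unitaries. A short calculation shows that $\Omega$ decouples the $\wh{\alg}$--generators, which become $b\otimes 1$, from the $\wh{V}$--generators, which become the shifted operators $\Xi\mapsto[\,\xi\mapsto (1\otimes M_\eta)\Xi(\xi-\eta)\,]$. I would then Fourier-transform the last factor using Pontryagin duality $L^2(\wh{V})\cong L^2(V)$, turning the dual shifts into multiplications, and perform a \emph{straightening} change of variables on the resulting $L^2(V)\otimes L^2(V)$ that converts the combined character--multiplication $\eta(v_2+v_3)$ into a multiplication in a single variable.

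The heart of the argument --- and the step I expect to be the main obstacle --- is the bookkeeping of this straightening. After it, the dual generators together with one copy of the translations should be recognised as the crossed product $C_0(V)\rtimes_{\mathrm{lt}}V$, which by the Stone--von Neumann theorem (the elementary fact, already in the spirit of Section 4, that $C_0(V)$ crossed by its own translations is the compacts) is isomorphic to $\cpt(L^2(V))$, while $\pi_0(\alg)$ survives as a commuting factor. Verifying that these two pieces generate the whole algebra as a genuine tensor product $\alg\otimes\cpt(L^2(V))$, and that the composite untwist--Fourier--straighten map is a $*$-isomorphism intertwining the two convolution products, is the routine but lengthy computation I would carry out without reproducing it here. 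Finally, $\alg\otimes\cpt(L^2(V))$ is the stabilisation of $\alg$ and is Morita equivalent to it through the imprimitivity bimodule $\alg\otimes L^2(V)$, which establishes the last assertion.
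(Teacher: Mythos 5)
The paper itself offers no argument for this theorem: its ``proof'' is a citation of Takai \cite{T74} (and of Takesaki \cite{T67} for the von Neumann version). Your outline is therefore competing with the classical spatial proof in the literature, and it is recognisably that proof: amenability of $V$ and $\wh{V}$ to justify working with regular representations, the observation that the dual action is spatially implemented, $\wh{\alpha}(\xi)=\mathrm{Ad}(1\otimes M_\xi)$, the untwisting unitary $\Omega$, the Fourier transform $L^2(\wh{V})\cong L^2(V)$, and Stone--von Neumann in the form $C_0(V)\rtimes_{\mathrm{lt}}V\cong\cpt(L^2(V))$. The steps you actually commit to are correct: $\Omega$ does send the $\wh{\alg}$-generators to $b\otimes 1$ and the $\wh{V}$-generators to $\Xi\mapsto[\xi\mapsto(1\otimes M_\eta)\Xi(\xi-\eta)]$, and your parenthetical that $1\otimes M_\xi$ is not a multiplier of $\wh{\alg}$ is right (for $\alg=\complex$ the dual action is a nontrivial automorphism group of the commutative algebra $C^*(V)\cong C_0(\wh{V})$, and inner automorphisms of a commutative algebra are trivial).

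However, the step you defer as ``routine but lengthy'' hides the one genuinely nontrivial idea, and in the form you describe it would fail. After untwist--Fourier--straighten there is \emph{no} change of variables on $L^2(V)\otimes L^2(V)$ that simultaneously achieves what you claim, namely (i) the dual generators and one copy of the translations living on a single tensor leg, and (ii) $\pi_0(\alg)$ surviving as a commuting factor. Concretely, the straightening $(v_2,v_3)\mapsto(v_2+v_3,v_3)$ does put $C_0(V)$ and the translations on one leg, so that $\cpt(L^2(V))$ forms there; but it turns the $\alg$-generators into multiplication by $\pi_0(\alpha_{w_3-w_2}(a))$, which acts on that same leg and does not commute with those compacts --- it is certainly not $\pi_0(\alg)\otimes 1\otimes 1$. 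The alternative straightening $(v_2,v_3)\mapsto(v_2,v_2+v_3)$ keeps the $\alg$-generators off the last leg, but then the translations remain diagonally coupled, $\lambda(u)\otimes\lambda(u)$, and the compacts do not form on a single leg. The missing ingredient is a final untwisting that uses the covariance relation $\lambda(w)\tilde\pi(a)\lambda(w)^*=\tilde\pi(\alpha_w(a))$ of the regular representation, or equivalently the $V$-equivariant isomorphism $C_0(V,\alg)\to C_0(V,\alg)$, $F\mapsto[v\mapsto\alpha_{-v}(F(v))]$, which intertwines $\alpha\otimes\mathrm{lt}$ with $\mathrm{id}\otimes\mathrm{lt}$. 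This is exactly the point where the action $\alpha$ is absorbed (and hence why the final answer $\alg\otimes\cpt(L^2(V))$ is independent of $\alpha$); it cannot be produced by coordinate changes on the $L^2$-legs alone, since $\alpha$ need not be unitarily implemented in the chosen faithful representation $\pi_0$. Once this lemma is inserted --- it is the second of the two isomorphisms in the standard modern write-up, the first being $(\alg\rtimes_\alpha V)\rtimes_{\wh{\alpha}}\wh{V}\cong(\alg\otimes C_0(V))\rtimes_{\alpha\otimes\mathrm{lt}}V$ --- your outline closes up into Takai's proof. The final assertion, that $\alg\otimes\cpt(L^2(V))$ is Morita equivalent to $\alg$ via the bimodule $\alg\otimes L^2(V)$, is fine as written.
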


\begin{proof}
The proof for  C$^*$-algebras can be found in \cite{T74} and for von Neumann algebras in \cite{T67}.
\end{proof}

This theorem tells us that crossed products with abelian groups are  indeed a duality, and confirming that it is a  natural noncommutative analogue of T-duality.

When $\alg = \complex$ with the trivial action of $V$ the crossed product $\complex\rtimes V$  is the C$^*$ group algebra of $V$. The crossed product algebra $\alg\rtimes_\alpha \integer$ and the Toeplitz algebra $\cT(\alg)$ are also closely related and one may show that $\cT(\alg)$ is a subalgebra of $\cT\otimes (\alg\rtimes_\alpha \integer)$, \cite[Prop. 5.8]{CMR}.


%
%

\section{The T-dual pictures of bulk and boundary}

We already have two different ways of looking at the bulk and boundary algebras and $K$-theory:
as the geometric algebras $\cb$ and $\ce$, or as the momentum space algebras $\wh{\cb}^\prime$ and $\wh{\ce}$, with long exact sequences (1) and (2), respectively. 
We shall now show that the momentum space algebras are, up to stable equivalence,  the T-duals of the geometric bulk and boundary algebras $\cb$, $\ce$.
To define the T-dual we need some vector group actions, so let us assume that $V_e \cong\real^{d-1}$ has an action $\varepsilon$ as automorphisms of $\ce$, which commutes with the action $\alpha$ of   $L_t \cong \integer$.
Then we claim that $V \cong \real^d$ has an action $\beta$ on $\cb = \ind_\integer^\real(\ce, \alpha)$, which is the product $\beta = \varepsilon\times\tau$ of the action of $\real^{d-1}$ on $\ce$ and the natural translation action $(\tau(w)f)(v) = f(v-w)$ on an equivariant function $f$ in the induced algebra.

The T-dual bulk and boundary algebras are now defined as the crossed product algebras
$$
\wh{\cb} = \cb\rtimes_\beta V \qquad{\rm and}\qquad \wh{\ce} = \ce\rtimes_\varepsilon V_e.
$$
Since the actions of $\varepsilon$ and $\alpha$ on $\ce$ commute, $\ind_\integer^\real(\ce,\alpha)\rtimes_\varepsilon V_e \cong \ind_\integer^\real(\ce\rtimes_\varepsilon V_e,\wt{\alpha}) = \ind_\integer^\real(\wh{\ce},\wt{\alpha})$, where $\wt{\alpha}(\ell)[f](v) = \alpha(\ell)[f(v)]$.
The definition   $\cb = \ind_\integer^\real(\ce,\wt{\alpha})$ now gives rise to a relationship between $\wh{\cb}$ and $\wh{\ce}$:
\begin{eqnarray*}
\wh{\cb} &:=& \cb\rtimes_\beta V\\ 
&=& \ind_\integer^\real(\ce,\alpha){\rtimes_\beta} (V_e\times \real)\\
&=& \ind_\integer^\real(\wh{\ce},\wt{\alpha})\rtimes_\tau \real. 
\end{eqnarray*}
This is already interesting, but  a dramatic simplification follows from Green's Theorem \cite{G}:

\begin{theorem} For any C$^*$-algebra $\alg$ with an action of a closed subgroup $L$ of a separable locally compact group $V$, and, writing $\tau$ for the natural translation action of $V$ on $\ind_L^V(\alg,\alpha)$, we have an isomorphism $M_\alpha$ 
$$
\ind_L^V(\alg,\alpha)\rtimes_\tau V \rmap{M_\alpha} (\alg\rtimes_\alpha L)\otimes \cpt(L^2(V/L)),
$$
so that $\ind_L^V(\alg,\alpha)\rtimes_\tau V$ is Morita equivalent to $\alg\rtimes_\alpha L$.
\end{theorem}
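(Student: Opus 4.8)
The statement is Green's imprimitivity theorem (in the form giving an explicit stabilized isomorphism), and the plan is to produce $M_\alpha$ in two stages: first build a Morita equivalence between $\ind_L^V(\alg,\alpha)\rtimes_\tau V$ and $\alg\rtimes_\alpha L$ via an explicit equivalence bimodule, and then identify that bimodule as a standard Hilbert module, so that its imprimitivity algebra is exactly the stabilization $(\alg\rtimes_\alpha L)\otimes\cpt(L^2(V/L))$.

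First I would pass to dense $*$-subalgebras and record the convolutions concretely. The domain $\ind_L^V(\alg,\alpha)\rtimes_\tau V$ contains $C_c(V,\ind_L^V(\alg,\alpha))$, whose elements I regard as compactly supported continuous $\alg$-valued functions $F(s,v)$ on $V\times V$ that are $L$-equivariant in the second slot, $F(s,vh)=\alpha(h^{-1})[F(s,v)]$; feeding the translation action $\tau$ into the crossed-product product of the Definition gives $(F_1*F_2)(s,v)=\int_V F_1(u,v)\,F_2(u^{-1}s,u^{-1}v)\,du$. On the target side, $\alg\rtimes_\alpha L$ contains $C_c(L,\alg)$ while $\cpt(L^2(V/L))$ is realized by integral kernels on $(V/L)\times(V/L)$, so the dense subalgebra of the tensor product consists of $\alg$-valued functions $\Phi(h,\dot x,\dot y)$ with the combined $L$-convolution and $V/L$-kernel product.

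The heart of the argument is Green's bimodule $X$, the completion of $X_0=C_c(V,\alg)$. I would equip $X_0$ with a right $C_c(L,\alg)$-action and an $(\alg\rtimes_\alpha L)$-valued inner product, both defined by integration over $V$, together with a left $C_c(V,\ind_L^V(\alg,\alpha))$-action and an $(\ind_L^V(\alg,\alpha)\rtimes_\tau V)$-valued inner product, carrying the modular-function normalizations needed for the formulas to be well defined (these are trivial in the paper's unimodular, abelian case $V\cong\real^d$, $L\cong\integer$). The theorem then reduces to verifying the imprimitivity axioms: each inner product is positive and full, the two module actions commute, and they satisfy the compatibility identity ${}_{A}\langle\xi,\eta\rangle\cdot\zeta=\xi\cdot\langle\eta,\zeta\rangle_{B}$ for $A=\ind_L^V(\alg,\alpha)\rtimes_\tau V$ and $B=\alg\rtimes_\alpha L$. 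This verification is where the work lies: the positivity of the inner products and the Haar-measure bookkeeping across the fibration $V\to V/L$ (via Weil's integration formula) are the main obstacle, and the fact that $V\to V/L$ admits only a Borel — not a continuous — section must be handled so that everything descends consistently.

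Finally I would upgrade the Morita equivalence to the stated isomorphism. A Borel section of $q\colon V\to V/L$ identifies $X$, as a right Hilbert $B$-module with $B=\alg\rtimes_\alpha L$, with the standard module $L^2(V/L)\otimes B$; for such a module the algebra $\cpt_B(X)$ of $B$-compact operators is canonically $\cpt(L^2(V/L))\otimes B$, and since an imprimitivity bimodule realizes its left algebra as precisely $\cpt_B(X)$, the left action delivers the desired isomorphism $M_\alpha\colon\ind_L^V(\alg,\alpha)\rtimes_\tau V\rmap{\cong}(\alg\rtimes_\alpha L)\otimes\cpt(L^2(V/L))$. Morita equivalence with $\alg\rtimes_\alpha L$ is then immediate, as tensoring any C$^*$-algebra with $\cpt(L^2(V/L))$ produces a Morita equivalent algebra. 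Equivalently, one can bypass the bimodule language and define $M_\alpha$ directly on $C_c(V,\ind_L^V(\alg,\alpha))$ by the section-based formula, then check by the same change of variables that it respects products and adjoints and is isometric for the $C^*$-norms — that change of variables being exactly the measure-theoretic computation identified above as the crux.
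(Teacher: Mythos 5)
The paper offers no proof of this theorem at all: it is stated as Green's Theorem and simply cited to \cite{G}, so there is no internal argument to compare yours against. Your outline is a correct reconstruction of the standard proof contained in that reference --- Green's imprimitivity bimodule built on $C_c(V,\alg)$, verification of the imprimitivity axioms with the Haar/Weil measure bookkeeping, and the Borel-section identification of the right Hilbert module with $L^2(V/L)\otimes(\alg\rtimes_\alpha L)$ (which is exactly where the separability hypothesis is used) to convert the Morita equivalence into the stabilized isomorphism $M_\alpha$ --- so it takes essentially the same route as the proof the paper implicitly relies on.
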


Applying this to our previous calculation, we therefore are led to the following conclusion: 

\begin{corollary} When $\cb = \ind_\integer^\real(\ce, \alpha)$ the T-duals are related by
$$
\wh{\cb} \cong (\wh{\ce}\rtimes_{\wt{\alpha}} \integer)\otimes \cpt(L^2(\real/\integer)),
$$ 
so that $\wh{\cb}$  is Morita equivalent to $\wh{\cb}^\prime = \wh{\ce}\rtimes_{\wt{\alpha}} \integer$.
\end{corollary}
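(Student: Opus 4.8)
The plan is to read off the Corollary as a direct application of Green's Theorem (Theorem~3) to the expression for $\wh{\cb}$ already obtained in the paragraph preceding it. The substantive identities have in fact been done: because $\varepsilon$ and $\alpha$ commute, inducing and forming the crossed product by $V_e$ may be interchanged, and this, together with the factorisation $V = V_e\times\real$ and $\beta = \varepsilon\times\tau$, has already yielded
$$
\wh{\cb} = \ind_\integer^\real(\wh{\ce},\wt{\alpha})\rtimes_\tau\real .
$$
So what remains is essentially the invocation of Green's Theorem on the right-hand side.

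First I would set $\alg = \wh{\ce}$, $L = \integer$, $V = \real$, and take the subgroup action to be $\wt{\alpha}$, noting that $\integer$ is a closed subgroup of the separable locally compact group $\real$ and that $\wt{\alpha}$ is a genuine $\integer$-action on the C$^*$-algebra $\wh{\ce}$, being the pointwise lift $\wt{\alpha}(\ell)[f](v)=\alpha(\ell)[f(v)]$ of $\alpha$. With these choices the hypotheses of Theorem~3 are met verbatim, and $\tau$ is precisely the natural translation action to which that theorem refers.

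Applying the isomorphism $M_{\wt{\alpha}}$ of Green's Theorem then gives
$$
\ind_\integer^\real(\wh{\ce},\wt{\alpha})\rtimes_\tau\real \cong (\wh{\ce}\rtimes_{\wt{\alpha}}\integer)\otimes\cpt(L^2(\real/\integer)),
$$
and combining this with the earlier identification of the left-hand side as $\wh{\cb}$ yields exactly the claimed isomorphism. The final Morita-equivalence assertion is then immediate: tensoring by $\cpt(L^2(\real/\integer))$ is a stabilisation, hence a Morita equivalence, so $\wh{\cb}$ is Morita equivalent to $\wh{\cb}^\prime = \wh{\ce}\rtimes_{\wt{\alpha}}\integer$. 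This is in fact exactly the Morita-equivalence clause already packaged into the conclusion of Theorem~3, so no further argument is needed.

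Because Green's Theorem is taken as given, there is no analytic obstacle here; the corollary is a bookkeeping deduction. The one point I would verify rather than treat as routine is the consistent labelling of the actions: that the translation action $\tau$ surviving after the interchange of crossed product and induction is the one to which Green's Theorem applies, and that the $\integer$-action appearing in $\wh{\ce}\rtimes_{\wt{\alpha}}\integer$ is the same $\wt{\alpha}$ throughout. A mismatch in these identifications is the only place where an error could realistically creep in.
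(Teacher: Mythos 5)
Your proposal is correct and takes essentially the same route as the paper: the paper likewise reduces to the identity $\wh{\cb} = \ind_\integer^\real(\wh{\ce},\wt{\alpha})\rtimes_\tau \real$ established just before the Corollary, and then applies Green's Theorem with $\alg = \wh{\ce}$, $L = \integer$, $V = \real$ and action $\wt{\alpha}$ to obtain $\wh{\cb} \cong (\wh{\ce}\rtimes_{\wt{\alpha}}\integer)\otimes\cpt(L^2(\real/\integer))$ and the resulting Morita equivalence. Your care over the labelling of the actions $\tau$ and $\wt{\alpha}$ matches the paper's conventions exactly, so nothing further is needed.
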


This shows that the definition of the bulk algebra as a T-dual is Morita equivalent to the definition used in Section 3.

To proceed further and derive the PV sequence linking the K-theory of $\wh{\ce}$ and $\wh{\cb}$ we need Connes' Thom isomorphism theorem \cite{C81}, one of the deep results of the  subject.
Given an action $\alpha$ of $\real^D$ on a C$^*$-algebra $\alg$ we have maps  $K_j:\alg \to K_j(\alg)$ and 
$\kappa_j^D:\alg \to K_{j+D}(\alg\rtimes_\alpha \real^D)$.

\begin{theorem}
The maps  $K_j:\alg \to K_j(\alg)$ and $\kappa_j^D:\alg \to K_{j+D}(\alg\rtimes_\alpha \real^D)$ define the functors from the category of C$^*$ algebras with $\real^D$ actions and $\real^D$-morphisms to abelian groups.
There is a natural equivalence $\phi^j_D$ between these functors (independent of the action $\alpha$), so that for any morphism $f:\alg \to \cC$ between algebras we have a commutative diagram 

$$
\matrix{
K_j(\alg) &\rmap{\phi^j_D(\alg)} &\kappa_j^D(\alg)\cr
\ldmap{K_j(f)} &  &\rdmap{\kappa_j^D(f)}\cr
K_j(\cC) &\rmap{\phi^j_D(\cC)} &\kappa_j^D(\cC).\cr
}
$$
\end{theorem}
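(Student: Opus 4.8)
The plan is to recognise this as Connes' Thom isomorphism theorem and to split the argument into a formal part, which is routine, and an analytic part, which carries all the depth. First I would dispose of the functoriality claims. That $K_j$ is a functor is standard, since it simply forgets the $\real^D$-action; for $\kappa_j^D$, an $\real^D$-equivariant morphism $f:\alg\to\cC$ induces a morphism $f\rtimes\real^D:\alg\rtimes_\alpha\real^D\to\cC\rtimes_\gamma\real^D$ of crossed products --- equivariance is exactly the condition needed for $f$ to intertwine the twisted convolution products --- and composing with $K_{j+D}$ yields a functor. The commuting square in the statement is then precisely the assertion that $\{\phi^j_D(\alg)\}$ is a natural transformation, so beyond this bookkeeping the theorem reduces to constructing the transformation and proving each $\phi^j_D(\alg)$ is an isomorphism.

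Next I would reduce to $D=1$. Since a crossed product by $\real^D$ factors as an iterated crossed product by $\real$, one has $\alg\rtimes_\alpha\real^D\cong(\cdots(\alg\rtimes\real)\cdots)\rtimes\real$, and I would define $\phi^j_D$ as the $D$-fold composite of the one-dimensional transformations $\phi^j_1,\phi^{j+1}_1,\ldots$, each raising the K-degree by one, which accounts for the shift $j\mapsto j+D$. Naturality and the isomorphism property for general $D$ then follow from the $D=1$ case. For the trivial action the $D=1$ statement is easy: $\alg\rtimes_{\mathrm{id}}\real\cong\alg\otimes C_0(\real)$ is the suspension $S\alg$, and $\phi^j_1$ is simply the suspension isomorphism $K_j(\alg)\cong K_{j+1}(S\alg)$, which is manifestly natural and bijective.

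The real work is to transport this from the trivial action to an arbitrary action $\alpha$. Here I would use Connes' deformation of $\alpha$ to the trivial action through the rescaled actions $\alpha^{(s)}_t=\alpha_{st}$, $s\in[0,1]$: form $C([0,1],\alg)$ with the fibrewise action $\tilde\alpha_t(g)(s)=\alpha_{st}(g(s))$, take the crossed product by $\real$, and observe that this is a field over $[0,1]$ whose fibre at $s$ is $\alg\rtimes_{\alpha^{(s)}}\real$, equal to $S\alg$ at $s=0$ and to $\alg\rtimes_\alpha\real$ at $s=1$. The aim is to show that the Thom map $\phi^j_1$ is invariant along this deformation, so that its being the suspension isomorphism at $s=0$ forces it to be an isomorphism at $s=1$.

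The main obstacle is exactly this invariance: one must show that moving from $s=0$ to $s=1$ along the crossed-product field preserves the isomorphism, even though the fibre action, and hence the fibre crossed product, genuinely changes with $s$. Evaluation maps of an arbitrary field need not induce isomorphisms on K-theory, so this is where the real content sits, and it must exploit the specific Bott periodicity built into the crossed product by $\real$. The same difficulty can be repackaged in KK-theory --- construct a Bott class in $KK^1(\alg,\alg\rtimes_\alpha\real)$ and a dual Dirac class in $KK^1(\alg\rtimes_\alpha\real,\alg)$ and show they are mutually inverse under Kasparov product, the product computation again reducing to the deformation onto the trivial action. Either way the analytic heart is the same, and since this is a survey I would ultimately just invoke Connes' theorem \cite{C81}.
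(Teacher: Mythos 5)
Your proposal is correct and ends exactly where the paper does: the paper offers no proof of this theorem at all, but simply cites Connes \cite{C81} for it as \lq\lq one of the deep results of the subject\rq\rq, which is also your final move. Your preliminary reductions (functoriality via $f\rtimes\real^D$, iteration to $D=1$, the trivial-action case as the suspension isomorphism, and the rescaling deformation $\alpha^{(s)}_t=\alpha_{st}$ as the analytic heart) form a sound sketch of how the cited proof goes, but they are detail the paper itself never attempts.
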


To lighten the notation we drop the superscript on $\phi$, and its algebraic argument which will be clear from the context, and just write $\phi_D$.
We also extend our earlier abbreviation to write $f_j = K_j(f)$. The crossed product $\alg\rtimes \real^D$ is generated by compactly supported functions $F: \real^D \to \alg$ which the morphism $f$ maps to $\wt{f}[F] = f\circ F$. This means we can set $\wt{f}_{j+D}= \kappa_j^D(f)$.
We note that the suspension $S\alg = C_0(\alg) \cong \ind_{1}^\real(\alg,1)$, so that Green's map $M_1$ gives the isomorphism  
$$
S\alg\rtimes \real \cong \ind_{1}^\real(\alg,1)\rtimes \real \rmap{M_1} \cpt(L^2(\real))\otimes \alg.
$$ 
So  $S\alg\rtimes \real$ is Morita equivalent to $\alg$, and  $K_{j+1}(\alg) \lmap{(M_1)_*} K_{j+1}(S\alg\rtimes \real) \lmap{\phi_1} K_j(S\alg)$, linking $\phi_1$ to the Bott maps in $K$-theory.

We now have  the basis for our first result on sequences for the T-dual algebras.

 \medskip
\begin{theorem}
The long exact sequences of $K$-groups for the bulk and boundary algebras are related to those of their T-duals by the following diagram:
\begin{center}
\begin{picture}(380,175)(20,40) 
\put(320,70){\makebox(0,0){$K_d(S\wh{\ce})$}}
\put(80,180){\makebox(0,0){$K_{d-1}(S\wh{\ce})$}}
\put(80,70){\makebox(0,0){$K_d(\wh{\ce})$}}
\put(200,70){\makebox(0,0){$K_d(\ind_\integer^\real(\wh{\ce}))$}}
\put(200,42){\makebox(0,0){$K_{d+1}(\wh{\cb})$}}
\put(200,208){\makebox(0,0){$K_d(\wh{\cb})$}}
\put(200,180){\makebox(0,0){$K_{d-1}(\ind_\integer^\real(\wh{\ce}))$}}
\put(320,180){\makebox(0,0){$K_{d-1}(\wh{\ce})$}}
\put(130,100){\makebox(0,0){$K_1(\ce)$}}
\put(200,100){\makebox(0,0){$K_1(\cb)$}}
\put(270,100){\makebox(0,0){$K_1(S\ce)$}}
\put(130,150){\makebox(0,0){$K_0(S\ce)$}}
\put(200,150){\makebox(0,0){$K_0(\cb)$}}
\put(270,150){\makebox(0,0){$K_0(\ce)$}}
\put(103,180){\vector(1,0){60}} 
\put(237,180){\vector(1,0){50}} 
\put(165,70){\vector(-1,0){58}} 
\put(293,70){\vector(-1,0){58}} 
\put(80,80){\vector(0,1){90}} 
\put(320,170){\vector(0,-1){90}}
 \put(149,150){\vector(1,0){35}} 
\put(218,150){\vector(1,0){35}} 
\put(250,100){\vector(-1,0){33}} 
\put(180,100){\vector(-1,0){33}} 
\put(270,140){\vector(0,-1){30}} 
\put(130,110){\vector(0,1){30}} 
\put(110,155){\vector(-1,1){15}}
\put(285,155){\vector(1,1){15}}
\put(110,92){\vector(-1,-1){15}}
\put(285,95){\vector(1,-1){15}}
\put(200,155){\vector(0,1){15}}
\put(200,95){\vector(0,-1){15}}
\put(105,185){\vector(4,1){75}} 
\put(220,205){\vector(4,-1){80}}
\put(175,43){\vector(-4,1){80}} 
\put(298,62){\vector(-4,-1){75}} 
\put(200,190){\vector(0,1){10}} 
\put(200,60){\vector(0,-1){10}} 
\put(145,76){\makebox(0,0){$\wt{\epsilon}_{d}$}}
\put(265,77){\makebox(0,0){$\wt{\iota}_{d}$}}
\put(145,186){\makebox(0,0){$\wt{\iota}_{d-1}$}}
\put(255,186){\makebox(0,0){$\wt{\epsilon}_{d-1}$}}
\put(160,90){\makebox(0,0){$\epsilon_1$}}
\put(240,90){\makebox(0,0){$\iota_1$}}
\put(160,158){\makebox(0,0){$\iota_0$}}
\put(240,158){\makebox(0,0){$\epsilon_0$}}
\put(110,125){\makebox(0,0){$1-\alpha_*$}}
\put(290,125){\makebox(0,0){$1-\alpha_*$}}
\put(60,125){\makebox(0,0){$1-\wt{\alpha}_*$}}
\put(340,125){\makebox(0,0){$1-\wt{\alpha}_*$}}
\put(185,85){\makebox(0,0){$\phi_{d-1}$}}
\put(98,95){\makebox(0,0){$\phi_{d-1}$}}
\put(302,92){\makebox(0,0){$\phi_{d-1}$}}
\put(187,163){\makebox(0,0){$\phi_{d-1}$}}
\put(117,165){\makebox(0,0){$\phi_{d-1}$}}
\put(275,165){\makebox(0,0){$\phi_{d-1}$}}
\put(190,193){\makebox(0,0){$\phi_1$}}
\put(190,55){\makebox(0,0){$\phi_1$}}
\end{picture}
\end{center}

where $\wt{\alpha}_* = \phi_{d-1}\circ\alpha_*\circ\phi_{d-1}^{-1}$, and the outer diagonal maps are defined to make the diagram commute.
\end{theorem}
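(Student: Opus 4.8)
The plan is to obtain the outer (T-dual) six-term sequence by transporting the inner geometric sequence (1) across Connes' Thom isomorphism and exploiting its naturality. First I would record that, after the suspension identification $\ci\cong S\ce$ with $K_j(\ci)\cong K_{j+1}(\ce)$, the inner hexagon is precisely the geometric exact sequence (1), its two index maps rewritten as $1-\alpha_*$ as stated at the end of Section 2. Every object in this hexagon carries an action of $V_e\cong\real^{d-1}$: the algebra $\ce$ by $\varepsilon$, the suspension $S\ce$ by $\varepsilon$ acting pointwise on the $\ce$-values, and $\cb=\ind_\integer^\real(\ce,\alpha)$ by the $V_e$-part of $\beta=\varepsilon\times\tau$. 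Because $\varepsilon$ commutes with $\alpha$ by hypothesis, the structure maps $\iota\colon\ci\to\cb$ and $\epsilon\colon\cb\to\ce$, as well as the generating automorphism $\alpha(1)$ of $\ce$, are all $\real^{d-1}$-morphisms, so they qualify as arrows in the category on which Theorem 5 operates.

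Next I would apply the natural equivalence $\phi_{d-1}$ of Theorem 5 termwise. Naturality, i.e. the commuting squares of Theorem 5 applied to $\iota$ and $\epsilon$, gives commutativity of every radial square joining the inner hexagon to the outer one, with the outer horizontal maps being those induced by $\iota\rtimes V_e$ and $\epsilon\rtimes V_e$, namely the $\wt{\iota}_\bullet$ and $\wt{\epsilon}_\bullet$ of the figure. To identify the outer objects I would use three facts: $\ce\rtimes_\varepsilon V_e=\wh{\ce}$ by definition; $\cb\rtimes_\varepsilon V_e\cong\ind_\integer^\real(\wh{\ce},\wt{\alpha})$, which is exactly the commuting-actions computation carried out at the start of Section 6; and $S\ce\rtimes_\varepsilon V_e\cong S(\ce\rtimes_\varepsilon V_e)=S\wh{\ce}$, since suspension commutes with the crossed product. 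Matching degrees, $\phi_{d-1}$ carries $K_0(\ce),K_1(\ce)$ to $K_{d-1}(\wh{\ce}),K_d(\wh{\ce})$, carries $K_0(\cb),K_1(\cb)$ to $K_{d-1}(\ind_\integer^\real(\wh{\ce})),K_d(\ind_\integer^\real(\wh{\ce}))$, and carries $K_0(S\ce),K_1(S\ce)$ to $K_{d-1}(S\wh{\ce}),K_d(S\wh{\ce})$, reproducing the six outer vertices.

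For the two index maps I would apply naturality of $\phi_{d-1}$ to the $\real^{d-1}$-automorphism $\alpha(1)$: this intertwines $\alpha_*$ with $\wt{\alpha}_*$, so that $\phi_{d-1}\circ(1-\alpha_*)\circ\phi_{d-1}^{-1}=1-\phi_{d-1}\,\alpha_*\,\phi_{d-1}^{-1}=1-\wt{\alpha}_*$, which is the definition of $\wt{\alpha}_*$ given in the statement. Since $\phi_{d-1}$ is an isomorphism at every vertex and all squares commute, exactness of the outer hexagon is inherited from that of (1). Finally, to attach $\wh{\cb}$ I would invoke the Section 6 identity $\wh{\cb}=\ind_\integer^\real(\wh{\ce},\wt{\alpha})\rtimes_\tau\real$ and apply a second Connes' Thom map $\phi_1$, this time for the remaining translation $\real$ generated by $\tau$; this supplies the vertical isomorphisms $K_j(\ind_\integer^\real(\wh{\ce}))\to K_{j+1}(\wh{\cb})$ and, via the Bott identification of $\phi_1$ noted just before the theorem, pins down the outer diagonal maps to and from $\wh{\cb}$ so that the whole figure commutes.

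The main obstacle is the equivariance bookkeeping rather than any deep new idea: one must check carefully that $\iota$, $\epsilon$, and above all the generator $\alpha(1)$, really are morphisms of $\real^{d-1}$-algebras (this is exactly where the commutativity of $\varepsilon$ and $\alpha$ is used), so that Theorem 5 applies and the transported connecting map is genuinely $1-\wt{\alpha}_*$. A secondary point requiring care is the clean identification $\cb\rtimes_\varepsilon V_e\cong\ind_\integer^\real(\wh{\ce})$ together with the compatibility of suspension, crossed product and the degree shifts, so that no index or Bott-periodicity offset is introduced when the two Connes' Thom maps $\phi_{d-1}$ and $\phi_1$ are composed.
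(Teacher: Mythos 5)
Your proposal is correct and follows essentially the same route as the paper: apply the naturality of Connes' Thom isomorphism $\phi_{d-1}$ (Theorem 5) to the geometric sequence (1), identify $\ce\rtimes_\varepsilon V_e=\wh{\ce}$, $\cb\rtimes_\varepsilon V_e\cong\ind_\integer^\real(\wh{\ce},\wt{\alpha})$ and $S\ce\rtimes_\varepsilon V_e\cong S\wh{\ce}$, deduce exactness of the outer hexagon because its maps are conjugates of the inner ones, and then attach $\wh{\cb}$ by a second application $\phi_1$ via $\wh{\cb}=\ind_\integer^\real(\wh{\ce},\wt{\alpha})\rtimes_\tau\real$, with the diagonals defined by commutativity. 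Your explicit verification that $\iota$, $\epsilon$ and $\alpha(1)$ are $\real^{d-1}$-morphisms is a point the paper leaves implicit, but it is the same argument.
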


\bigskip
\begin{proof}
We apply Connes' natural transformation $\phi_{d-1}$ to the long exact sequence (1):
$$
\matrix{
K_0(\ci) &\rmap{\iota_0} &K_0(\cb) &\rmap{\epsilon_0} &K_0(\ce)\cr
\lumap{1-\alpha_*} & & & &\rdmap{1-\alpha_*}\cr
K_1(\ce) &\rmap{\iota_1} &K_1(\cb) &\rmap{\epsilon_1} &K_1(\ci).\cr}
$$
(Here we suppress the identification $K_j(\ci) \cong K_{j+1}(\ce)$, but we shall give the isomorphisms explicitly where that clarifies the role of the above natural transformations.) 
Concentrating on one horizontal row and using the action of  $V_e \cong \real^{d-1}$ on $\ce$ we have 
$$
\matrix{
K_j(\ci) &\rmap{\iota_j} &K_j(\cb) &\rmap{\epsilon_j} &K_j(\ce)\cr
\ldmap{\phi_{d-1}} & &\ldmap{\phi_{d-1}} & &\rdmap{\phi_{d-1}}\cr
K_{j+d-1}(\ci\rtimes V_e) &\rmap{\wt{\iota}_{j+d-1}} &K_{j+d-1}(\cb\rtimes V_e) &\rmap{\wt{\epsilon}_{j+d-1}} &K_{j+d-1}(\ce\rtimes V_e).\cr}
$$

Exploiting our earlier remarks, for $\ci = S\ce$, $\cb  = \ind_\integer^\real(\ce)$, and $\ce\rtimes V_e = \wh{\ce}$, this yields
$$
\matrix{
K_j(S\ce) &\rmap{\iota_j} &K_j(\cb) &\rmap{\epsilon_j} &K_j(\ce)\cr
\ldmap{\phi_{d-1}} & &\ldmap{\phi_{d-1}} & &\rdmap{\phi_{d-1}}\cr
K_{j+d-1}(S\wh{\ce}) &\rmap{\iota_{j+d-1}} &K_{j+d-1}(\ind_\integer^\real(\wh{\ce})) &\rmap{\epsilon_{j+d-1}} &K_{j+d-1}(\wh{\ce}).\cr}
$$

Applying the natural transformation $\phi_{d-1}$ to the original long exact sequence we obtain
\begin{center}
\begin{picture}(380,140)(20,60) 
\put(320,70){\makebox(0,0){$K_d(S\wh{\ce})$.}}
\put(80,180){\makebox(0,0){$K_{d-1}(S\wh{\ce})$}}
\put(200,70){\makebox(0,0){$K_d(\ind_\integer^\real(\wh{\ce}))$}}
\put(80,70){\makebox(0,0){$K_d(\wh{\ce})$}}
\put(200,180){\makebox(0,0){$K_{d-1}(\ind_\integer^\real(\wh{\ce}))$}}
\put(320,180){\makebox(0,0){$K_{d-1}(\wh{\ce})$}}
\put(130,100){\makebox(0,0){$K_1(\ce)$}}
\put(200,100){\makebox(0,0){$K_1(\cb)$}}
\put(270,100){\makebox(0,0){$K_1(S\ce)$}}
\put(130,150){\makebox(0,0){$K_0(S\ce)$}}
\put(200,150){\makebox(0,0){$K_0(\cb)$}}
\put(270,150){\makebox(0,0){$K_0(\ce)$}}
\put(103,180){\vector(1,0){60}} 
\put(237,180){\vector(1,0){50}} 
\put(165,70){\vector(-1,0){58}} 
\put(293,70){\vector(-1,0){58}} 
\put(80,80){\vector(0,1){90}} 
\put(320,170){\vector(0,-1){90}}
 \put(149,150){\vector(1,0){35}} 
\put(218,150){\vector(1,0){35}} 
\put(250,100){\vector(-1,0){33}} 
\put(180,100){\vector(-1,0){33}} 
\put(270,140){\vector(0,-1){30}} 
\put(130,110){\vector(0,1){30}} 
\put(110,155){\vector(-1,1){15}}
\put(285,155){\vector(1,1){15}}
\put(110,95){\vector(-1,-1){15}}
\put(285,95){\vector(1,-1){15}}
\put(200,155){\vector(0,1){15}}
\put(200,95){\vector(0,-1){15}}
\put(145,76){\makebox(0,0){$\wt{\epsilon}_{d}$}}
\put(265,77){\makebox(0,0){$\wt{\iota}_{d}$}}
\put(145,187){\makebox(0,0){$\wt{\iota}_{d-1}$}}
\put(255,186){\makebox(0,0){$\wt{\epsilon}_{d-1}$}}
\put(160,92){\makebox(0,0){$\epsilon_1$}}
\put(240,92){\makebox(0,0){$\iota_1$}}
\put(160,156){\makebox(0,0){$\iota_0$}}
\put(240,156){\makebox(0,0){$\epsilon_0$}}
\put(110,125){\makebox(0,0){$1-\alpha_*$}}
\put(290,125){\makebox(0,0){$1-\alpha_*$}}
\put(60,125){\makebox(0,0){$1-\wt{\alpha}_*$}}
\put(340,125){\makebox(0,0){$1-\wt{\alpha}_*$}}
\put(185,85){\makebox(0,0){$\phi_{d-1}$}}
\put(94,95){\makebox(0,0){$\phi_{d-1}$}}
\put(302,92){\makebox(0,0){$\phi_{d-1}$}}
\put(187,163){\makebox(0,0){$\phi_{d-1}$}}
\put(117,165){\makebox(0,0){$\phi_{d-1}$}}
\put(275,165){\makebox(0,0){$\phi_{d-1}$}}
\end{picture}
\end{center}

Since the maps on the outer rectangle are conjugates of the inner maps the outer rectangle gives another exact sequence.

We can now apply $\phi_1: K_{j-1}(\ind_\integer^\real(\wh{\ce})) \to K_{j}(\ind_\integer^\real(\wh{\ce}\rtimes \real)) = K_{j}(\wh{\cb})$ to expand this diagram to  
that presented in the statement of the theorem, with the outer diagonal maps defined to make the diagram commute.
\end{proof}

\begin{corollary}
The $K$-groups of the T-dual algebras fit into a long exact sequence of the form
\begin{equation}
\matrix{
K_d(\wh{\ce}) &\rmap{\phi_1\circ(1-\wt{\alpha}_*)} &K_d(\wh{\ce}) &\rmap{\phi_1\circ\wt{\iota}_{d-1}\circ\phi_1^{-1}}  &K_d(\wh{\cb})\cr
\lumap{{\wt{\epsilon}_d\circ\phi_1^{-1}}} & & & &\rdmap{\wt{\epsilon}_{d-1}\circ\phi_1^{-1}}\cr
K_{d+1}(\wh{\cb}) &\lmap{\phi_1\circ\wt{\iota}_d\circ\phi_1^{-1}} &K_{d+1}(\wh{\ce}) &\lmap{\phi_1\circ (1-\wt{\alpha}_*)} &K_{d-1}(\wh{\ce}).\cr}
\end{equation}
\end{corollary}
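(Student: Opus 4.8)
The plan is to obtain sequence (3) by reading off the outermost hexagon of the commutative diagram constructed in the preceding theorem, using the elementary homological fact that conjugating an exact sequence by a family of isomorphisms --- one at each object, compatible with the maps --- produces another exact sequence. The conjugating isomorphisms are supplied by the two natural equivalences $\phi_{d-1}$ and $\phi_1$ coming from Connes' Thom isomorphism, and the naturality of these transformations is precisely what makes the relevant squares commute, so that no further verification of exactness is needed once the maps have been identified.

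Concretely, I would first write sequence (1) as a six-term cyclic hexagon, using the suspension identification $K_j(\ci)\cong K_{j+1}(\ce)$ to express the two index maps as $1-\alpha_*$, as recorded at the end of Section 2. I would then apply $\phi_{d-1}$ associated to the action $\varepsilon$ of $V_e\cong\real^{d-1}$. Since $\phi_{d-1}$ is a natural equivalence, each object $K_j(\,\cdot\,)$ is carried isomorphically to $K_{j+d-1}(\,\cdot\,\rtimes_\varepsilon V_e)$ and each induced map $f_j$ is carried to its conjugate $\wt{f}_{j+d-1}=\phi_{d-1}\circ f_j\circ\phi_{d-1}^{-1}$; in particular the corner maps become $1-\wt{\alpha}_*$ with $\wt{\alpha}_*=\phi_{d-1}\circ\alpha_*\circ\phi_{d-1}^{-1}$. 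Using $\ci=S\ce$, $\cb=\ind_\integer^\real(\ce)$ and $\ce\rtimes_\varepsilon V_e=\wh{\ce}$, this is the middle exact hexagon in the objects $K_{\ast}(\wh{\ce})$, $K_{\ast}(\ind_\integer^\real(\wh{\ce}))$ and $K_{\ast}(S\wh{\ce})$.

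The second step is to apply $\phi_1$ to pass from $\ind_\integer^\real(\wh{\ce})$ to $\wh{\cb}=\ind_\integer^\real(\wh{\ce})\rtimes_\tau\real$. Again naturality conjugates the two horizontal maps meeting the induced-algebra objects, giving $\phi_1\circ\wt{\iota}_{d-1}\circ\phi_1^{-1}$ and $\phi_1\circ\wt{\iota}_{d}\circ\phi_1^{-1}$, while the diagonal index maps into and out of $\wh{\cb}$ pick up a single factor of $\phi_1$ or $\phi_1^{-1}$, yielding $\phi_1\circ(1-\wt{\alpha}_*)$, $\wt{\epsilon}_{d}\circ\phi_1^{-1}$ and $\wt{\epsilon}_{d-1}\circ\phi_1^{-1}$. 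Reading these maps around the resulting hexagon reproduces exactly sequence (3), and its exactness is inherited from that of (1) through the two isomorphism-conjugations.

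The main obstacle I anticipate is bookkeeping rather than any deep step: one must track consistently which objects acquire the $\phi_1$-conjugation (the two copies arising from $\ind_\integer^\real(\wh{\ce})$ become $\wh{\cb}$, in degrees $d$ and $d+1$) and align the suspension identification $K_j(S\wh{\ce})\cong K_{j+1}(\wh{\ce})$ with the interpretation of $\phi_1$ as a Bott map noted just before the theorem, so that all degrees match modulo the relevant periodicity and the sequence closes up cyclically. Once the naturality squares are checked to commute --- which Connes' theorem guarantees for every morphism and independently of the action --- the exactness of (3) is automatic.
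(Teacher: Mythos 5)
Your proposal is correct and follows essentially the same route as the paper: the paper's Theorem also proceeds by conjugating the geometric hexagon (1) through the natural Thom equivalence $\phi_{d-1}$, then through $\phi_1$ to reach $\wh{\cb}$, and its Corollary then reads off the outer hexagon using the identification $(M_1)_*\circ\phi_1\colon K_j(S\wh{\ce})\cong K_{j+1}(\wh{\ce})$ (Green's map $M_1$ suppressed), exactly the suspension/Bott alignment you describe, with exactness preserved because all the new maps are conjugates of the old ones by isomorphisms. The only slip is cosmetic: $\phi_1\circ(1-\wt{\alpha}_*)$ arises from the suspension identification on the index maps between $K$-groups of $\wh{\ce}$, not from the maps touching $\wh{\cb}$, but this does not affect the argument.
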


\begin{proof}
The last Theorem's outer hexagon can be rotated and flattened to give the diagram:

$$
\matrix{
K_d(\wh{\ce}) &\rmap{1-\wt{\alpha}_*} &K_{d-1}(S\wh{\ce}) &\rmap{\phi_1\circ\wt{\iota}_{d-1}}  &K_d(\wh{\cb})\cr
\lumap{{\wt{\epsilon}_d\circ\phi_1^{-1}}} & & & &\rdmap{\wt{\epsilon}_{d-1}\circ\phi_1^{-1}}\cr
K_{d+1}(\wh{\cb}) &\lmap{\phi_1\circ\wt{\iota}_d} &K_{d}(S\wh{\ce}) &\lmap{1-\wt{\alpha}_*} &K_{d-1}(\wh{\ce}),\cr}
$$
and, using the isomorphism $(M_1)_*\circ\phi_1: K_{j+1}(\wh{\ce})\to K_j(S\wh{\ce})$ (but suppressing $M_1$  in the diagram), the result follows.
\end{proof}

Since $\wh{\cb}$ and $\wh{\cb}^\prime$ are Morita equivalent, the $K$-groups in this long exact sequence are the same as those appearing in the momentum space bulk-boundary exact sequence.
To investigate whether the maps agree we consider the following diagrams, which  appear in  \cite[Lemma 2.3 proof]{PV} and  (upside down) in Paschke, \cite[p. 483]{Pa}:
$$
\matrix{ 
K_1(\wh{\ce}\otimes\cpt) &\rmap{\wh{\iota}_1} &K_1(\cT(\wh{\ce})) &\rmap{\wh{\epsilon}_1} &K_1(\wh{\ce}\rtimes_{\wt{\alpha}}\integer) &\rmap{\delta} & K_0(\wh{\ce}\otimes \cpt) \cr
\lumap{\cong} &\nwarrow &\lumap{d_*}  &\rnemap{i_*} & & &\cr
K_1(\wh{\ce}) &\rmap{1-\wt{\alpha}_*} &K_1(\wh{\ce}) & &K_1(\wh{\ce}\rtimes_{\wt{\alpha}}\integer)  & & \cr
& &  &\lnemap{i_*} &
&\lsemap{\partial} &\cr
K_1(\wh{\ce}) &\rmap{1- \wt{\alpha}_*} &K_1(\wh{\ce}) & &\lumap{\gamma} 
& &K_0(\wh{\ce})\cr
&  & &\lsemap{\wt{\iota}_0} &
 &\lnemap{\wt{\epsilon}_0} & &\cr
& & & &K_0(\ind_\integer^\real(\wh{\ce})). & &\cr
}
$$
In both components of diagram the algebra $\alg$ has been changed to $\wh{\ce}$.
The top connected diagram uses the maps $d_*$, $i_*$,  and index map $\delta$ of \cite{PV}, but $\alpha^{-1}$ has been changed to $\wt{\alpha}_*$, and $\psi_*$, $\pi_*$ to $\wh{\iota}$ and $\wh{\epsilon}$, to match our conventions. (Such  changes of conventions do not prejudice exactness)
In the lower connected diagram of Paschke, \cite{Pa},  the index map is $\partial$, and we have changed $\rho$ to $\wt{\alpha}$ and made a sign change to match our convention, and also inserted  Paschke's map $\gamma$ whose construction is a key result of the paper.  We have also identified the mapping torus $T_\alpha(\wh{\ce})$ with $\ind_\integer^\real(\wh{\ce})$: the former is just the restriction of the induced algebra functions to $[0,1]$, and the latter the appropriately periodic extension of the functions in the mapping torus. 

The maps $1-\wt{\alpha}_*$ and $i_*$ agree, permitting us to merge the complementary diagrams:
$$
\matrix{ 
K_1(\wh{\ce}\otimes\cpt) &\rmap{\wh{\iota}_1} &K_1(\cT(\wh{\ce})) &\rmap{\wh{\epsilon}_1} &K_1(\wh{\ce}\rtimes_{\wt{\alpha}}\integer) &\rmap{\delta} & K_0(\wh{\ce}\otimes \cpt) \cr
\lumap{\cong} & &\lumap{d_*}  &\lnemap{i_*} &\lumap{(M_{\wt{\alpha})_*}} &\lsemap{\partial} &\rdmap{\cong}\cr
K_1(\wh{\ce}) &\rmap{\phi_1^{-1}\circ(1- \wt{\alpha}_*)} &K_0(S\wh{\ce}) &\longrightarrow &K_1(\wh{\cb})) &\longrightarrow &K_0(\wh{\ce})\cr
&  & &\lsemap{\wt{\iota}_0} &\lumap{\phi_1}
 &\lnemap{\wt{\epsilon}_0} & &\cr
& & & &K_0(\ind_\integer^\real(\wh{\ce})). & &\cr
}
$$

\bigskip
We have made two additional changes: exploiting the factorisation $\gamma = (M_{\wt{\alpha}})_*\circ\phi_1$,  \cite[App. A]{HMT17}, and  replacing $K_1(\wh{\ce})$ by $K_0(S\wh{\ce})$ to emphasize the fact that the lowermost part of the diagram agrees with a section of the long exact sequence for the T-duals appearing above.
The diagram shows how the long exact sequences of T-dual and momentum space $K$-groups are related.
In particular, we see that the index maps $\partial$ and $\delta$ are essentially the same. Considering the right hand side of the diagram, and comparing maps from $K_1(\wh{\cb})$ to $K_0(\wh{\ce})$ we see that $\wt{\epsilon}_0\circ\phi_1^{-1} = \partial\circ (M_{\wt{\alpha}})_*$, showing that, up to the Connes and Green identifications, the index map $\partial$ is given by $\wt{\epsilon}_0$.

\begin{theorem}
The maps between the T-dual $K$-groups in the long exact sequence (3) of  Corollary 8 
agree with those of the momentum space exact sequence, and, in particular, $\wt{\epsilon}_0\circ\phi_1^{-1} = \partial\circ M_{\wt{\alpha}}$ shows that the index map of the momentum space sequence (2) is, up to standard identifications corresponds to the boundary evaluation map $\wt{\epsilon}_0$ of the T-dual theory.
\end{theorem}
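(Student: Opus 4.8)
The plan is to leverage the fact that, by Corollary 5, the $K$-groups occurring in the T-dual sequence (3) already coincide with those of the momentum-space sequence (2): both are built from $K_*(\wh{\ce})$ and $K_*(\wh{\cb})$, the latter Morita equivalent to $\wh{\cb}^\prime = \wh{\ce}\rtimes_{\wt{\alpha}}\integer$. Since the groups match on the nose, the entire substance of the theorem lies in comparing the \emph{maps}, and above all in pinning down the index map. The engine for this is the large commutative diagram assembled just before the statement, which fuses the Pimsner--Voiculescu diagram of \cite[Lemma 2.3]{PV}---governing (2) through the Toeplitz extension $0\to\cpt\otimes\wh{\ce}\to\cT(\wh{\ce})\to\wh{\ce}\rtimes_{\wt{\alpha}}\integer\to0$---with the Paschke diagram of \cite[p.~483]{Pa}, which governs the mapping-torus picture that we have identified with $\ind_\integer^\real(\wh{\ce})$.

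First I would justify that the two diagrams may be merged. This rests on the single observation that the PV map $i_*$ and the Paschke map $1-\wt{\alpha}_*$ are the \emph{same} homomorphism $K_1(\wh{\ce})\to K_1(\wh{\ce})$ once the convention changes recorded above ($\alpha^{-1}\mapsto\wt{\alpha}_*$, $\psi_*,\pi_*\mapsto\wh{\iota},\wh{\epsilon}$, $\rho\mapsto\wt{\alpha}$, together with one sign reversal) have been applied; each such change preserves exactness and so does no harm. Next I would invoke the factorisation $\gamma=(M_{\wt{\alpha}})_*\circ\phi_1$ proved in \cite[App.~A]{HMT17}, which is precisely what threads Connes' natural transformation $\phi_1$ (Theorem 6) and Green's isomorphism $M_{\wt{\alpha}}$ (Theorem 4) through Paschke's construction of $\gamma$. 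This factorisation is what licenses replacing $K_1(\wh{\ce})$ by $K_0(S\wh{\ce})$ and thereby recognising the lowermost edge of the fused diagram as a genuine segment of the T-dual sequence (3).

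With the fused diagram in place, the remaining work is a pair of diagram chases. Commutativity of each constituent square comes from the cited results: the naturality squares from Theorem 6, the Green identifications from Theorem 4, and the PV and Paschke squares from \cite{PV} and \cite{Pa}. Chasing the perimeter identifies the maps of (3) with those of (2) term by term. For the distinguished index map I would compare the two routes from $K_1(\wh{\cb})$ to $K_0(\wh{\ce})$: sequence (3) contributes $\wt{\epsilon}_0\circ\phi_1^{-1}$, while Paschke's side contributes $\partial\circ(M_{\wt{\alpha}})_*$, and their coincidence along the commuting right-hand triangle is exactly the asserted identity $\wt{\epsilon}_0\circ\phi_1^{-1}=\partial\circ(M_{\wt{\alpha}})_*$. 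This exhibits the analytically delicate PV index map $\partial$ as the transparent boundary-evaluation map $\wt{\epsilon}_0$ of the T-dual theory, up to the Connes and Green identifications.

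I expect the principal obstacle to be bookkeeping rather than any new idea. The two source diagrams are recorded in \cite{PV} and \cite{Pa} under incompatible sign and orientation conventions (Paschke's appears upside down), so the real labour is to check that every substitution and sign reversal is globally consistent, so that the fused diagram genuinely commutes and stays exact. The one substantive input beyond diagram chasing is the factorisation $\gamma=(M_{\wt{\alpha}})_*\circ\phi_1$; granting that, the conclusion follows from naturality of $\phi_1$ and the explicit form of $M_{\wt{\alpha}}$.
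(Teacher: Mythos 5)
Your proposal is correct and follows essentially the same route as the paper: the paper's own proof is an outline resting on exactly the fused Pimsner--Voiculescu/Paschke diagram, the factorisation $\gamma = (M_{\wt{\alpha}})_*\circ\phi_1$ from \cite[App.~A]{HMT17}, and the comparison of the two routes from $K_1(\wh{\cb})$ to $K_0(\wh{\ce})$ that you describe. The only cosmetic difference is that the paper disposes of the second index map explicitly by a suspension argument and cites Theorem 7 to tie the conclusion back to the geometric boundary maps, whereas you subsume these under the general term-by-term diagram chase.
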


\begin{proof}
In outline: we have checked the agreement for one of the index groups, and the other follows using suspensions. The agreement between the remaining maps is more easily checked. By Theorem 7 the index maps in the momentum space sequence  agree with the boundary evaluation maps of the geometric sequence.
\end{proof}

%
%

\section{The H-flux}

The Dixmier--Douady Theorem characterises continuous trace algebras in terms of their spectra and H-flux 3-forms.
In string theory the H-flux has a clear interpretation in terms of brane charges, but 3-forms are not prominent in condensed matter problems.

The role of the principal torus bundle fibres in T-duality is played by the $d$-dimensional torus $V/L$ in the condensed matter systems. There can also be control parameters, which can be regarded as the base of the bundle. Since $H$ is trivial in in one or two dimensions the minimal possibilities would be two control variables for a one dimensional crystal, one for a two-dimensional crystal, or none for a three-dimensional crystal. 

One candidate is the mathematical description of screw dislocations in crystals \cite{RZV, HMT16}

\begin{picture}(400,150)(0,20)
\multiput(40,90)(0,-10){4}{\line(3,-1){110}} 
\multiput(40,90)(10,10){6}{\line(3,-1){110}} 
\multiput(40,90)(12,-4){10}{\line(0,-1){30}} 
\multiput(40,90)(12,-4){10}{\line(1,1){50}} 
\multiput(149,54)(10,10){6}{\line(0,-1){30}}
\multiput(149,54)(0,-10){4}{\line(1,1){50}} 

\put(230,110){\makebox(0,0){dislocate}}
\put(210,100){\vector(1,0){40}} 

\multiput(250,90)(0,-10){4}{\line(3,-1){60}} 
\multiput(250,90)(10,10){4}{\line(3,-1){60}} 
\multiput(280,120)(10,10){3}{\line(3,-1){128}} 
\put(344,98){\line(1,1){20}}
\put(330,90){\line(3,2){14}}
\put(344,98){\line(-5,-6){24}}
\put(330,90){\line(-1,-1){20}} 
\put(330,90){\line(0,-1){9}} 
\put(320,80){\line(0,-1){11}} 
\put(322,70){\line(-1,-1){11}} 
\multiput(310,60)(12,-4){6}{\line(0,-1){30}} 
\multiput(310,60)(11,11){3}{\line(3,-1){60}} 
\multiput(322,56)(12,-4){5}{\line(1,1){58}} 
\multiput(250,90)(12,-4){6}{\line(0,-1){30}} 
\multiput(250,90)(12,-4){5}{\line(1,1){50}} 
\multiput(370,40)(11,11){3}{\line(0,-1){30}}
\multiput(408,78)(10,10){3}{\line(0,-1){30}}
\multiput(310,60)(0,-10){4}{\line(3,-1){60}} 
\multiput(370,40)(0,-10){4}{\line(1,1){58}}
\thicklines
\put(395,95){\vector(-3,1){95}}
\put(300,125){\vector(-1,-1){30}}
\put(270,97){\vector(3,-1){50}}
\put(320,78){\vector(0,-1){9}}
\put(320,71){\vector(3,-1){40}}
\put(358,59){\vector(1,1){36}}

\put(260,140){\vector(3,-1){20}} 
\put(260,140){\vector(0,-1){20}} 
\put(260,140){\vector(1,1){10}} 
\put(285,140){\makebox(0,0){$V$}}
\put(260,150){\makebox(0,0){$U$}}
\put(250,125){\makebox(0,0){$Z$}}
\end{picture}

Paths which circumnavigate the dislocation (pictured on the right) experience a jump. Intuitively, this extends the group of horizontal lattice translations generated by $U$ and  $V$ by vertical lattice translations $Z$ to give an integer Heisenberg group generated by unitaries $U$ and $V$, together with a central unitary $Z$, satisfying $VU = UVZ$.
The centre $\{Z^n: n\in \integer\}$ has irreducible representations $Z \mapsto e^{i\theta}$, so that we have an algebra bundle over $S^1$ with the fibre over $\theta \in S^1$ being the algebra with $VU = e^{i\theta}UV$, which, for almost all $\theta$, is an irrational rotation algebra. 
It is the T-dual of a continuous trace algebra defined using $P = \torus^3$ considered as a principal $\torus^2$-bundle over $S^1$ with the volume 3-form defining the $H$-flux.
This was a key observation in showing that the T-dual of a continuous trace algebra for the trivial $\torus^2$- bundle over a circle with the volume as $H$-flux, could be associated with a noncommutative geometry, \cite[\S 5]{MR}.


%
%

\section{Magnetic translations and monopoles}

We conclude with a more speculative suggestion. The magnetic flux through a surface surrounding a Dirac monopole would give rise to a 3-form, \cite{Di,WZ}, which is the local form of the group 3-cocycle discussed below.  Despite one early reported observation,  Dirac monopoles have proved elusive, but analogues have been found recently in spin ice, \cite{C08}, and in this section and the next we describe these in more detail.

In the presence of a magnetic field the momentum operators $P_j$ are replaced by $P_j-\frac{e}{c}A_j(X)$, where $e$ the electron charge, $c$ the speed of light, and $A$ is the magnetic vector potential. (Although we are now considering dimension $d=3$, we shall continue to use  $A$, $v$, $x$, $y$ etc. for 3-dimensional vectors.)
We can extend the natural translation operators $U_0(x)$ on $L^2(\real^3)$ to  magnetic translation operators $U_A(x)$, which acts on $\psi \in L^2(\real^3)$ by parallel transport:
$$
(U_A(x)\psi)(v) = \exp\left[-i\frac {e}{c\hbar}\int_0^1\, ds \, x.A(v + sx)\right] \psi( v + x).
$$
(Usually the magnetic translation operators are chosen with the opposite sign of $e$, since, in a suitable gauge,  the operators $P_j+\frac ec A_j$ commute with the kinetic energy, but the mathematics differs only by that sign change.) 
Comparing $(U_A(x)U_A(y)\psi)(v)$ and $(U_A(x+ y)\psi)(v)$ we see that the arguments of $\psi$ match, but the exponential factors in front have exponents differing  by the integral of $-i\frac {e}{c\hbar}A$ round the edges of the triangle $\Delta(v,x,y)$ with vertices $v$, $v + x$ and $v + x + y$. By Stokes' Theorem, this is the same as the integral of $-i\frac {e}{c\hbar}{dA}$ over the triangle, which is ($-ie/c\hbar$ times) the magnetic flux through the triangle.
The exponential defines a function $\phi_A(x,y)$ taking the wave function argument $v$  to the exponential of this (scaled) flux through 
$\Delta(v, x, y)$,  and the multiplier is the operator $\Phi_A(x,y)$ obtained by applying the function $\phi_A(x,y)$ to the position operator $X$, (symbolically the exponential of the flux through $\Delta(X,x,y)$, so that
$$
U_A(x)U_A(y) = \Phi_A(x,y)U_A(x+y).
$$
(Since they are all functions of the position observable $X$, the multipliers $\{\Phi_A(x,y)\}$ all commute with one another.) 
This result is given in \cite[Eq. (9)]{Z64}  (with the sign of $e$ changed to match our conventions), but the proof given there seems to work only for uniform magnetic fields, (which are the main focus of that paper). 

The operator nature of $\Phi_{\bf A}$ has some important consequences:
\begin{enumerate}
\item
As the multiplier is a function of $X$, it is consequently affected by translations, so that a magnetic translation through $x$ sends $\sigma_A$ to a translated version $x[\sigma_A]$;
\item
The products
\begin{eqnarray*}
\left(U_A(x)U_A(y)\right)U_A(z) &=& \Phi_A(x,y)U_A(x+y)U_A(z)\\
&=& \Phi_A(x,y)\Phi_A(x+y,z)U_A(x+y+z),\\
U_A(x)\left(U_A(y)U_A(z)\right) &=& U_A(x)\Phi_A(y,z)U_A(y+z)\\
&=& x[\Phi_A(y,z)]U_A(x)U_A(y+z)\\
&=& x[\Phi_A(y,z)]\Phi_A(x,y+z)U_A(x+y+z).
\end{eqnarray*}
differ by the factor $x[\Phi_A(y,z)]\Phi_A(x,y+z)/\Phi_A(x,y)\Phi_A(x+y,z)$, which is the exponential of the sum of fluxes through $\Delta(X+x,y,z)$ and $\Delta(X,x,y+z)$ minus those through $\Delta(X,x,y)$ and $\Delta(X,x+y,z)$. Together this gives the exponential of the flux out of the tetrahedron whose faces are these four triangles. 
In Maxwell theory without monopoles this flux must vanish and one has the cocycle identities
$$
\Phi_A(x,y)\Phi_A(x+y,z) = x[\Phi_A(y,z)]\Phi_A(x,y+z),
$$
and thus the associative law $U_A(x)\left(U_A(y)U_A(z)\right) = U_A(x)\left(U_A(y)U_A(z)\right)$.
However, any magnetic monopoles within a tetrahedron produce a net flux into or out of the tetrahedron, violating the cocycle identity so that the multipliers no longer cancel, and the operators $U_A(x,\xi)$ fail to be associative unless the flux obeys an integrality condition so that the exponential is 1. This integrality provides Dirac's argument for the quantisation of electric charge \cite{Di}. (At first sight the flux out of the tetrahedron should be the integral of $dF = d^2A = 0$ over the interior, but as noted in \cite{WZ}, this argument fails due to the Dirac-type string emanating from a monopole which must exit the tetrahedron through one of the surface triangles.)
In general  
$$
\Psi_A(x,y,z) = x[\Phi_A(y,z)]\Phi_A(x,y+z)/\Phi_A(x,y)\Phi_A(x+y,z)
$$
is  a group cohomological 3-cocycle \cite{WZ} controlling the associativity of the operators. This has been investigated in \cite{BHM06,BHM10} where it has been remarked that such nonassociativity is not serious, and can be handled by moving into an appropriate category of modules.
\end{enumerate}

\section{Magnetic monopoles in pyrochlore lattices}

In pyrochlore lattices such as Dy$_2$Ti$_2$O$_7$ and Ho$_2$Ti$_2$O$_7$ the rare earth atoms of dysprosium or holmium sit at the vertices of regular tetrahedra. In fact, each vertex at ${\bf v}$ is shared by two tetrahedra, with a point ${\bf p}$ of one corresponding to the point $2{\bf v}-{\bf p}$ of the other. Each tetrahedron can be labelled by its centroid ${\bf c}$ and these form a diamond lattice, (the 3-dimensional lattice describing carbon atoms in diamond). The tetrahedra can be considered positive or negative according to whether they are rotated or reflected versions of the some reference tetrahedron. The neighbours of each tetrahedron in the lattice have the opposite parity.
We write ${\bf e}_{{\bf v}}$ for the unit vector from ${\bf c}$ to ${\bf v}$.

\medskip
\begin{center}
\includegraphics[width=75mm]{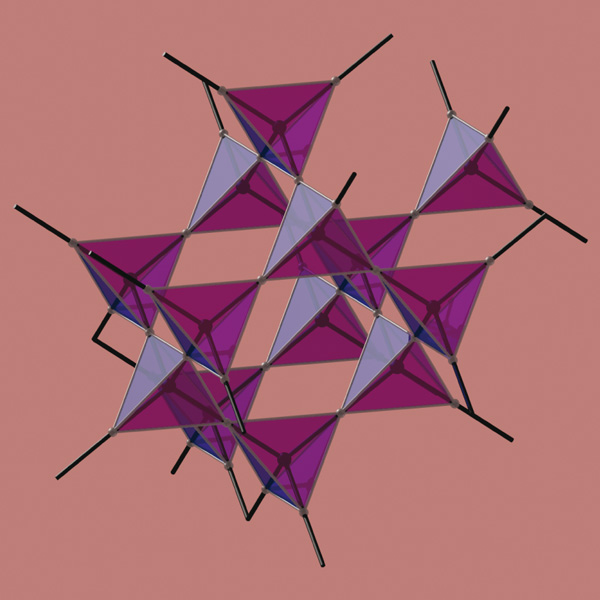}

Pyrochlore lattice with dipoles (taken from \cite{C08})
\end{center}

(We concentrate on the magnetic properties of the dysprosium and holmium compounds, but neutron scattering experiments suggest that other pyrochlore lattices such as Yb$_2$Ti$_2$O$_7$, Pr$_2$Zr$_2$O$_7$, and Tb$_2$Ti$_2$O$_7$  should display further interesting quantum effects. It seems that such quantum spin ice should belong to one of seven classes classified by their statistics and time reversal behaviour, \cite{WS16}, thus providing a link to the bulk--boundary features discussed earlier.)  

Electron spin provides a magnetic dipole at each vertex atom, which points along the line from the vertex to the centroid either inwards (towards the centroid) or outwards (away from it), so that the magnetic dipole has the form ${\bf S}_{{\bf v}} = S\epsilon_{{\bf v}}{\bf e}_{{\bf v}}$ with $\epsilon_{{\bf v}} = \pm 1$.
One usually refers to this as spin ice, as the lattice has similar geometry to Pauling's model of ordinary ice, with an oxygen atom at the centroid,  two hydrogen atoms inside each tetrahedron and two outside replacing two inward and two outward pointing dipoles, \cite{P35}.

The energy contribution of these dipoles can be described by a lattice dipole model:
$$
H = -J\sum_{{\bf v},{\bf v}^\prime} {\bf S}_{{\bf v}}\cdot {\bf S}_{{\bf v}^\prime} + Da^3\sum_{{\bf v},{\bf v}^\prime}
\left[\frac{{\bf S}_{{\bf v}}\cdot {\bf S}_{{\bf v}^\prime}}{|{\bf r}_{{\bf v}{\bf v}^\prime}|^3} - 3\frac{({\bf S}_{{\bf v}}\cdot {\bf r}_{{\bf v}{\bf v}^\prime}) ({\bf S}_{{\bf v}^\prime}\cdot {\bf r}_{{\bf v}{\bf v}^\prime})}{|{\bf r}_{{\bf v}{\bf v}^\prime}|^5}\right],
$$
where ${\bf r}_{{\bf v}{\bf v}^\prime} = ({\bf v}-{\bf v}^\prime)$, $J$ and $D$ are constants, and $a$ denotes the distance between nearest neighbours. (There is always some tetrahedron containing both the nearest neighbours ${\bf v}$ and ${\bf v}^\prime$, and so $a$ is the length of a tetrahedron edge.)  

Although two-dimensional lattice models of this kind were solved by Lieb, Sutherland, and Baxter, \cite[Ch. 8,9]{B82}, three-dimensional lattices are much less tractable. 
We therefore restrict our attention to nearest neighbour interactions, and note that the contribution of this tetrahedron to the first term is 
$$
-JS^2\sum_{{\bf v},{\bf v}^\prime} \epsilon_{\bf v} \epsilon_{{\bf v}^\prime}\,{\bf e}_{{\bf v}}\cdot {\bf e}_{{\bf v}^\prime}.
$$
The contribution of ${\bf v}^\prime ={\bf v}$ is just $-JS^2$, but, otherwise, ${\bf e}_{{\bf v}}\cdot {\bf e}_{{\bf v}^\prime} = -1/3$
and one has the contribution
$$
\frac{JS^2}3 \sum_{{\bf v}\neq {\bf v}^\prime} \epsilon_{{\bf v}} \epsilon_{{\bf v}^\prime}
= \frac{JS^2}3 \left(\left|\sum_{{\bf v}} \epsilon_{{\bf v}}\right|^2  - 4\right).
$$
In this form it is clear that this energy contribution is minimum when $\sum_{{\bf v}} \epsilon_{{\bf v}} = 0$, that is, when two dipoles point inwards and two outwards.  (The second, more complicated, term in $H$ also reduces, and gives the same conclusion, when only nearest neighbours are considered.)
The ground state of the system (for nearest neighbour interactions) thus occurs when two dipoles point into and two out of each tetrahedron.
 
It is useful to visualise the dipoles in terms of the \lq\lq dumbell model\rq\rq\, in which each dipole is considered as having its north pole at the centroid of the tetrahedron into which it points, and its south pole at the centroid of the tetrahedron away from which it points.
The balance between inward- and outward-pointing dipoles mean that each centroid has two north and two south poles, and consequently no magnetic \lq\lq charge\rq\rq.

It is possible, however, to flip a dipole at one vertex to produce an excited state in which the two  tetrahedra meeting at that vertex have an imbalance with an extra dipole pointing inwards, and the other with an extra dipole pointing outwards. In the dumbell model  one tetrahedron has  three north poles and one south, 
and its neighbour has three south poles and one north, so that one seems to contain a north monopole, and the other a south.
Using our previous expressions the energy of this excited state is $\frac{8JS^2}3$ above the ground state (plus a rapidly decreasing function of interatomic distances from the second term in $H$.)

Flipping another dipole in one of these tetrahedra will restore the balance in that one, but create an imbalance in another neighbour. Overall there are still two tetrahedra with an imbalance but they are no longer neighbours, and the energy is the almost the same as before. This process can be iterated to move the monopoles as far apart as one wishes (even to infinity), with little energy needed, so that the monopoles are essentially free. 
The predicted consequences have been observed experimentally, \cite{C08},  see also \cite{BG02,JH10}.

Putting the pieces together we now have tetrahedra containing magnetic monopoles, and, by the magnetic translation analysis, we know that this leads to the presence of a group 3-cocycle.
However, that cocycle was given by a non-zero flux through a tetrahedron.
It is the magnetic field ${\bf H}$ rather than ${\bf B}$ which gives the interesting flux, since the divergence of ${\bf B}$ still vanishes in the pyrochlore lattice, (as there are no Dirac monopoles), but ${\rm div}({\bf H})$  does not. Its Hodge star is a non-zero 3-form $*{\rm div}({\bf H})$, which  is exact in $\real^3$.
(There is a constraint in T-duality as the cohomology class of $*{\rm div}({\bf H})$ should be integral, but this differs from the Dirac integrality condition.)
With strict lattice translation symmetry the 3-form could be lifted from a closed but not exact 3-form on the quotient of $\real^3$ by the diamond lattice.
There is a tension between the desire to obtain a non-trivial cohomology class and the fact that the monopole breaks strict lattice translation symmetry, but perhaps this flux will provide a useful approximation in suitably engineered situations, just as monopoles provide a useful way of intepreting flipped magnetic dipoles. 
Alternatively, one might be appropriate to switch to compactly  supported cohomology, as discussed in  \cite[\S 2]{MW}. 


\section*{Acknowledgements}

This paper is based on a talk at the IGA/AMSI workshop on {\it Topological matter, strings, K-theory and related areas} in Adelaide in September 2016, and I'm pleased to acknowledge support from ARC Discovery Project grant DP150100008. 
Some later stages of the writing were carried out  in Cambridge, supported by EPSRC grant number EP/K032208/1. The author would like to thank the Isaac Newton Institute for Mathematical Sciences for support and hospitality during the {\it Operator algebras: subfactors and applications} programme when that work was undertaken.

Many of the ideas arose out of discussions with V.\  Mathai and G.C.\ Thiang over several years.
I'm grateful to Stephen Blundell in the Oxford Condensed Matter group for very useful conversations about spin ice, and for drawing to my attention to several  papers which I would otherwise have missed.
Finally thanks to a very careful referee who made numerous useful suggestions.


\appendix

%
%

\section{Noncommutative geometry and K-theory}

The operators representing quantum-mechanical observables can be added,  multiplied by complex scalars, and by each other, and their adjoints formed, and they can conveniently be modelled by C$^*$-algebras (which are norm-closed *-subalgebras of the bounded operators on a Hilbert space).
Noncommutative geometry is motivated by the Gel'fand--Naimark characterisation of commutative C$^*$-algebras: 

\begin{theorem} 
Every commutative C$^*$-algebra $\alg$ is isometrically *-isomorphic to $C_0(M)$, with pointwise  multiplication, star operation, and the supremum norm $\|f\| = \sup\{ |f(m)|: m\in M\}$, for some locally compact Hausdorff space $M$, which can be identified with the spectrum of $\alg$. Compact spaces correspond to unital commutative C$^*$-algebras.
\end{theorem}

A continuous map  $\phi: M_1 \to M_2$ between two compact spaces induces a  *-homomorphism $\phi^*:C(M_2) \to C(M_1)$ given by $\phi^*(f) = f\circ\phi$, so that the category of commutative unital C$^*$ algebras with  *-homomorphisms is contravariantly equivalent to the category of compact Hausdorff spaces and continuous maps, and this extends with minor modifications to locally compact spaces and non-unital algebras. This observation enables us to replace geometry by commutative algebra. 
This captures even some quite subtle geometric features, for example, the Serre--Swan Theorem gives an algebraic characterisation of vector bundles on $M$:

\begin{theorem}
The space of sections $\Gamma(E)$ of a finite rank vector bundle $E$ over a compact space $M$ has a pointwise multiplication action of $C(M)$,  with respect to which it is a finite rank projective $C(M)$-module, (that is,the image under a projection $p$ of a finite rank free $C(M)$-module). Conversely,  every finite rank projective $C(M)$-module such module arises as sections $\Gamma(E)$ of a finite rank vector bundle $E$ over $M$.  
\end{theorem}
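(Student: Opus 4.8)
The plan is to prove the two implications separately: the forward direction (sections form a finite rank projective module) rests on embedding an arbitrary bundle into a trivial one, while the converse rests on recognising the image of a continuous idempotent matrix as the sections of a bundle.

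First I would handle the forward direction. Given a finite rank vector bundle $E \to M$ of rank $n$, the key step is to embed $E$ as a subbundle of a trivial bundle $M \times \complex^N$. Using compactness of $M$, I cover $M$ by finitely many open sets $U_1,\ldots,U_r$ over which $E$ trivialises, choose a partition of unity $\{\rho_i\}$ subordinate to this cover, and on each $U_i$ use the local trivialisation to obtain a fibrewise-linear map $\phi_i: E|_{U_i} \to \complex^n$. Patching the maps $\rho_i\,\phi_i$ (each extended by zero outside the support of $\rho_i$) produces a continuous, fibrewise-injective bundle map $E \to M \times \complex^N$ with $N = rn$; injectivity holds because at each point some $\rho_i$ is strictly positive, where $\phi_i$ is an isomorphism on the fibre. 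Equipping $M \times \complex^N$ with its standard Hermitian metric, the fibrewise orthogonal complement $E^\perp$ is again a subbundle and $E \oplus E^\perp \cong M \times \complex^N$. Since $\Gamma(-)$ carries finite direct sums to finite direct sums, this gives $\Gamma(E) \oplus \Gamma(E^\perp) \cong C(M)^N$, so $\Gamma(E)$ is exactly the image of the idempotent projecting onto the first summand, hence a finite rank projective $C(M)$-module.

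For the converse, let $P$ be a finite rank projective $C(M)$-module, so by the description in the statement $P = p(C(M)^N)$ for some idempotent $p = p^2 \in M_N(C(M))$, which I regard as a norm-continuous family of idempotent matrices $p(m) \in M_N(\complex)$. I set $E_m = p(m)\complex^N$ and $E = \bigsqcup_{m\in M} E_m \subset M \times \complex^N$ with the subspace topology. The rank of $p(m)$ equals its trace, an integer-valued continuous function and therefore locally constant, which endows $E$ with a well-defined fibre dimension. The substantive step is local triviality: near a point $m_0$ with $\dim E_{m_0} = k$, I pick $v_1,\ldots,v_k \in \complex^N$ so that $p(m_0)v_1,\ldots,p(m_0)v_k$ is a basis of $E_{m_0}$; by continuity $p(m)v_1,\ldots,p(m)v_k$ remain linearly independent, and hence span $E_m$, for all $m$ in a neighbourhood of $m_0$, giving a continuous local frame and thus a local trivialisation. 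Finally I verify $\Gamma(E) \cong P$: a global section of $E$ is precisely a continuous map $s: M \to \complex^N$ with $p(m)s(m) = s(m)$ for all $m$, which is exactly an element of $p(C(M)^N)$.

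The hard part is the local triviality in the converse. The difficulty is that local constancy of the fibre dimension by itself only fibres $E$ set-theoretically; one must produce genuinely continuous trivialisations. The observation that linear independence is an open condition, applied to $m \mapsto p(m)v_j$, is what supplies these, and it is exactly here that both the idempotent nature of $p$ and the continuity of $m \mapsto p(m)$ are indispensable.
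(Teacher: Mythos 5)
Your proof is correct; it is the standard partition-of-unity/idempotent argument for the Serre--Swan theorem. Note that the paper itself offers no proof of this statement: it is quoted in Appendix A as background, with the proof left to the literature, so there is no in-paper argument to diverge from. Both halves of what you wrote are sound --- the forward direction via embedding $E$ into a trivial bundle $M\times\complex^N$ using a finite subordinate partition of unity and then splitting off the orthogonal complement, and the converse via recognising $p(C(M)^N)$ as the sections of the family $E_m = p(m)\complex^N$, with local triviality supplied by the openness of linear independence together with local constancy of $\mathrm{rank}\, p(m) = \mathrm{tr}\, p(m)$ (which, as you use implicitly, holds for arbitrary idempotents, not just self-adjoint ones). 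The one step you assert without justification is that your continuous local frame $m \mapsto (p(m)v_1,\ldots,p(m)v_k)$ yields a genuine trivialisation, i.e.\ that the map $U\times\complex^k \to E|_U$, $(m,c)\mapsto \sum_j c_j\, p(m)v_j$, has a \emph{continuous} inverse; this is routine (for instance, coordinates with respect to a continuously varying independent family are continuous by Cramer's rule applied to the Gram matrix), but it is worth saying, since it is exactly the point your final paragraph identifies as the crux.
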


The K-theory of a manifold is defined using equivalence classes of vector bundles, so the Serre-Swan correspondence allows one to interpret the group $K^0(M)$ of equivalence classes of vector bundles, with a group $K_0(C(M))$ of appropriate equivalence classes of projections. 
(One subtlety is that these are projections on free modules $C(M)^n$, so we need in the matrix algebras $M_n(\alg)$ with entries in $\alg = C(M)$. 
Several equivalence relations  can be used, for instance,  homotopy equivalence, unitary equivalence,  and  von Neumann equivalence ($p = u^*u \sim q= uu^*$), amongst others, all give the same K-theory.)
A homomorphism $\varphi: \alg_1 \to \alg_2$ will also define a homomorphism $M_n(\alg_1) \to M_n(\alg_2)$ which sends projections to projections. This is compatible with the equivalence relation and so gives a homomorphism $\varphi_*: K_0(\alg_1) \to K_0(\alg_2)$.

In geometry one then defines a group $K^1(M)$ as $K^0(SM)$, where $SM$ denotes the suspension of $M$.
The algebraic analogue of the suspension is to replace an algebra $\alg$ by $C_0(\real,\alg)$ (equipped with pointwise multiplication and star operation).
So we set $K_1(\alg) = K_0\left(C_0(\real,\alg)\right)$. One can also define $K_1(\alg)$ using unitary elements of the matrix algebras. 
Bott's Periodicity Theorem tells us that it is pointless trying to define further topological K-groups by more suspensions, since $K^0(S^2M)$  is isomorphic to $K^0(M)$. The same applies in the case of complex C$^*$-algebras, so that there is no need to look at $K_j(\alg)$ for $j>2$.

The  C$^*$-algebras appearing in quantum systems are generally noncommutative, but  their spectra do not always define a useful  geometry. 
The algebraic interpretation in the above discussion opens the way to replacing $C_0(M)$ by a general noncommutative C$^*$-algebra $\alg$, and so obtain a noncommutative geometry with well-defined  $K$-theory.

Two noncommutative examples which will play an important part in this paper later are the following:
\begin{enumerate}
\item
The algebra of compact operators, $\cpt(\hilb)$, generated by finite rank operators (or even by rank 1 projections.
\item
The Toeplitz algebra $\cT$  generated by the shift operator $S$ in $\ell^2({\mathbb N})$ is noncommutative since $S^*S \neq SS^*$.
\end{enumerate}

%
%

\section{Morita equivalence and continuous trace algebras}

Representations of C$^*$-algebras, whether the irreducibles which give the spectrum, or the projective modules which define the K-theory,  perform such  an important role in the theory that it is helpful to know when two algebras have the same representation theory. We shall say that 
two algebras
$\alg_1$ and $\alg_2$ are Morita equivalent when there is a natural equivalence between the categories of $\alg_1$-modules and $\alg_2$-modules. 
(Since the K-groups $K_j(\alg)$ are defined  in terms of $\alg$-modules, Morita equivalent algebras have the same K-theory.)

For example the compact operators $\cpt(\hilb)$ have only one irreducible representation (the obvious one on $\hilb$), and they are all Morita equivalent to each other, and, in particular, to $\cpt(\complex) \cong \complex$. 
More generally, the algebra $C_0(P,\cpt(\hilb))$ is Morita equivalent to $C_0(P)$.
The latter example is a special case of another useful class of C$^*$-algebras, the continuous trace algebras, which are only locally Morita equivalent to continuous functions on the spectrum \cite{RW}: 

\begin{definition}
Let $\{P_j\}$ be an open covering of the spectrum $P = \spec(\alg)$. If $\alg^{P_j}$ (the part of $\alg$ whose spectrum lies in $P_j$) is Morita equivalent to $C_0(P_j)$ then $\alg$ is said to be a {\em continuous trace algebra}. 
\end{definition}

Commutative C$^*$-algebras are certainly continuous trace algebras, and so are algebras of the form 
$C_0(P,\cpt(\hilb))$ . More general continuous trace algebras can be thought of as sections of a compact operator algebra bundle over the spectrum, and they have an interesting characterisation.
To piece together what happens in an overlap $P_{jk} = P_j\cap P_k$ we must find a transition automorphism from the compact operators over $P_j$ to those over $P_k$, and the only automorphisms of $\cpt(\hilb)$ are those given by $A \mapsto UAU^{-1}$ for some unitary operator $U$ on $\hilb$. So we pick a continuous map $U_{jk}:P_{jk} \to \cU(\hilb)$ to implement the transition automorphism over $P_j\cap P_k$. Next we consider triple intersections $P_{jk\ell}$ where $U_{jk}$, $U_{k\ell}$, and  $U_{j\ell}$ are all defined. Since $U_{jk}U_{k\ell}$, and  $U_{j\ell}$ both implement the transition between $P_j$ and $P_\ell$ they must define  the same automorphism, and this means that they differ only by a scalar function $c_{jk\ell}\in C_b(P_{jkl})$. By unitarity $c_{jk\ell}$ has modulus 1, and it is easy to see that it defines a \v{C}ech cocycle in $H^2(P,\torus)$, which, by a standard argument is isomorphic to 
$H^3(P, \integer)$, and we denote by $\delta$ the cohomology class defined by $c_{jk\ell}$.

\begin{theorem} {\bf (Dixmier--Douady, \cite{DD63}.)}
For every locally compact space  $P$ and $\delta\in H^3(P,\integer)$ there is a continuous trace algebra $\alg = CT(P,\delta)$  with spectrum $P$ and cohomology class  $\delta$, and this is unique up to Morita equivalence.
\end{theorem}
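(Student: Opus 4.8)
The plan is to establish a bijection between Morita-equivalence classes of continuous trace algebras with spectrum $P$ and the group $H^3(P,\integer)$, with the class $\delta$ attached to $\alg$ in the discussion above furnishing the map in one direction; existence is then surjectivity of this map and uniqueness is its injectivity. The text already explains how a continuous trace algebra $\alg$ produces a \v{C}ech class: on a good, locally finite cover $\{P_j\}$ the algebra is locally Morita equivalent to $C_0(P_j,\cpt(\hilb))$, the transition automorphisms of $\cpt(\hilb)$ are inner and hence implemented by unitaries $U_{jk}:P_{jk}\to\cU(\hilb)$, and the failure of these to compose on triple overlaps defines $c_{jk\ell}\in H^2(P,\torus)\cong H^3(P,\integer)$. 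I would run this construction in reverse for existence, and phrase the whole correspondence sheaf-theoretically for uniqueness.

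For existence I would begin with $\delta\in H^3(P,\integer)$ and use the isomorphism $H^3(P,\integer)\cong H^2(P,\underline{\torus})$ from the exponential sheaf sequence $0\to\integer\to\underline{\real}\to\underline{\torus}\to 0$, where $\underline{\real}$ is fine (this is where paracompactness of $P$ is used), so its higher cohomology vanishes and the connecting map is an isomorphism. Representing $\delta$ by a circle-valued \v{C}ech $2$-cocycle $\{c_{jk\ell}\}$, I would next lift each $c_{jk\ell}$ to a unitary transition function $U_{jk}:P_{jk}\to\cU(\hilb)$ with $U_{jk}U_{k\ell}U_{j\ell}^{-1}=c_{jk\ell}$; the obstruction to this lift is controlled by the sheaf of $\cU(\hilb)$-valued functions, which is trivial because $\cU(\hilb)$ is contractible by Kuiper's theorem. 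Using $\mathrm{Ad}(U_{jk})$ as clutching automorphisms of $\cpt(\hilb)$ over the overlaps I would glue the local algebras $C_0(P_j,\cpt(\hilb))$ into a bundle of compact operators and set $CT(P,\delta)$ to be its algebra of $C_0$-sections. Since each fibre $\cpt(\hilb)$ has a single irreducible representation the spectrum is $P$, and by construction the class extracted from the $U_{jk}$ is $\{c_{jk\ell}\}$, that is, $\delta$.

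For uniqueness I would argue that, up to Morita equivalence, a continuous trace algebra with spectrum $P$ is a locally trivial bundle with fibre $\cpt(\hilb)$ and structure group $\aut(\cpt(\hilb))=\proj\cU(\hilb)=\cU(\hilb)/\torus$, so such algebras are classified by $H^1(P,\underline{\proj\cU(\hilb)})$. The sequence $1\to\torus\to\cU(\hilb)\to\proj\cU(\hilb)\to 1$ then shows the connecting map $H^1(P,\underline{\proj\cU(\hilb)})\to H^2(P,\underline{\torus})$ is a bijection, the neighbouring terms built from the contractible group $\cU(\hilb)$ being trivial; equivalently, $\proj\cU(\hilb)$ is an Eilenberg--MacLane space $K(\integer,2)$, so its bundles are classified by $[P,K(\integer,3)]=H^3(P,\integer)$. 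This bijection is precisely the assignment $\alg\mapsto\delta$, so two continuous trace algebras over $P$ with the same class determine the same $\proj\cU(\hilb)$-bundle; their section algebras then become isomorphic after tensoring with $\cpt(\hilb)$ to absorb any difference in fibre stabilisation, and hence are Morita equivalent.

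The hard part, and the genuine content of the theorem, is the input from Kuiper's theorem: the contractibility of $\cU(\hilb)$ for infinite-dimensional separable $\hilb$ is exactly what kills the lifting obstruction in the existence half and collapses the sequence to the isomorphism $H^1(P,\underline{\proj\cU(\hilb)})\cong H^3(P,\integer)$ in the uniqueness half. Everything else is the bookkeeping of \v{C}ech cohomology together with the facts already noted in the text, that every automorphism of $\cpt(\hilb)$ is inner and that $\cpt(\hilb)$ has a unique irreducible representation.
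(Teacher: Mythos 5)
The paper itself offers no proof of this theorem: it constructs only the forward assignment (from a continuous trace algebra to the \v{C}ech class $c_{jk\ell}$) in the preceding text, and then states the result with a citation to \cite{DD63}. Your sketch therefore necessarily takes a different route, and it is the standard modern one, essentially the bundle-theoretic proof in Raeburn--Williams \cite{RW}: stabilise so that the algebra becomes the section algebra of a locally trivial $\cpt(\hilb)$-bundle, classify such bundles by $H^1$ of the sheaf of $\proj\cU(\hilb)$-valued functions, and use Kuiper's theorem to identify this set with $H^2(P,\underline{\torus})\cong H^3(P,\integer)$. In outline this is correct, and it has the merit of making visible exactly where the infinite-dimensionality of $\hilb$ enters, which the paper's bare citation hides.

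One step deserves repair. In the existence half, the claim that \lq\lq the obstruction to this lift is controlled by the sheaf of $\cU(\hilb)$-valued functions, which is trivial\rq\rq\ is not an argument as it stands: $\underline{\cU(\hilb)}$ is a nonabelian sheaf, so there is no $H^2$ whose vanishing you can invoke, and the exact sequence of pointed sets $H^1(P,\underline{\cU(\hilb)})\to H^1(P,\underline{\proj\cU(\hilb)})\to H^2(P,\underline{\torus})$ only shows that the fibre of the connecting map over the \emph{trivial} class is trivial; it gives neither surjectivity nor injectivity over other classes. The clean fix is already contained in your uniqueness paragraph: since $\cU(\hilb)$ is contractible, $\proj\cU(\hilb)$ is a $K(\integer,2)$, so principal $\proj\cU(\hilb)$-bundles over paracompact $P$ are classified by $[P,B\proj\cU(\hilb)]=[P,K(\integer,3)]\cong H^3(P,\integer)$, and this single bijection delivers existence and uniqueness simultaneously, making the separate cocycle-lifting discussion unnecessary (alternatively, existence can be had by an explicit construction of an algebra with prescribed cocycle, as in Dixmier--Douady's twisted Hilbert bundles). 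Note also that every sheaf-theoretic and homotopy-theoretic step here requires $P$ paracompact; this does not follow from local compactness alone and should be stated as a hypothesis, as it is in \cite{DD63} and \cite{RW}.
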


Thus the continuous trace algebras provide a wider class of C$^*$-algebras than the commutative algebras, whilst retaining a strong geometrical flavour. 
In string theory the Dixmier--Douady class may be thought of as the H-flux through $P$.  
The flux is automatically trivial in dimensions less than three and so it could be relevant for the bulk but not the boundary of three-dimensional objects. 

So far we have talked about complex C$^*$-algebras, but antilinear symmetries of a physical systems, such as  time inversion, preserve only the real structure and  it is more appropriate to work with real C$^*$-algebras (closed *-subalgebras of the bounded operators on a real Hilbert space).
In that case we define real K-groups $K\!R_{p,q}(\alg) = K_0(\alg\otimes {\rm Cliff}(\real^{p,q}))$ where ${\rm Cliff}(\real^{p,q})$ denotes the Clifford algebra of a pseudo-Euclidean space with quadratic form of signature $(p,q)$, and one has to use a graded theory, \cite{ABS,K,KK88}. The K-groups depend only on the difference $p-q$ and these have periodicity 8, \cite{A66}.
The  classification of quantum systems into 10 = 2 + 8 classes suggested a possible link to K-theory and  Bott periodicities.





\begin{thebibliography}{99}


\bibitem{AZ95}
A.\  Altland, and M.\  Zirnbauer: \lq\lq Random Matrix Theory of a Chaotic Andreev Quantum Dot\rq\rq, Phys. Rev. Lett. {\bf 76},  (1996) 3420.
{\tt  arXiv:cond-mat/9508026v2}

\bibitem{AZ96}
A.\  Altland, and M.\  Zirnbauer: \lq\lq Nonstandard symmetry classes in mesoscopic normal-superconducting hybrid
structures\rq\rq , Phys. Rev. B {\bf 55} (1997), 1142--1161, {\tt cond-mat/9602137}.

\bibitem{A66}
M.F.\ Atiyah: \lq\lq K -theory and reality,\rq\rq, The QuarterlyJournal of Mathematics {\bf 17} (1966), 367--386.

\bibitem{ABS}
M.F.\ Atiyah, R.\ Bott, and A.\ Shapiro: \lq\lq Clifford modules\rq\rq , Topology {\bf 3} Supplement (1963), 3-38.

\bibitem{ASV}
J.C.\ Avila, H.\ Schulz-Baldes, C.\ Villegas-Blas: \lq\lq Topological invariants of edge states for periodic two-dimensional models\rq\rq,  Math. Phys. Anal. Geom. {\bf 16}(2)  (2013), 137--170.

\bibitem{ASS}
J.\  Avron, R.\  Seiler, and B.\  Simon:  \lq\lq Homotopy and quantisation in condensed matter physics\rq\rq, Phys. Rev. Lett.
{\bf 51}, (1983) 51--53.

\bibitem{B82}
R.J.\ Baxter: Exactly solvable models in statistical mechanics, Academic Press, London, 1982.

\bibitem{B}
J.\  Bellissard: \lq\lq K-theory of C*—Algebras in solid state physics\rq\rq,  in Statistical Mechanics and Field Theory: Mathematical Aspects, edited by T.\  Dorlas,
N.\  Hugenholtz, and M.\  Winnink, 1986, vol. {\bf 257} of Lecture Notes in Physics, Springer Verlag, Berlin.

\bibitem{BES}
J.\  Bellisard, A.\ van Elst, and H.\ Schulz-Baldes: \lq\lq The noncomutative geometry of the quantum Hall effect\rq\rq, J. Math. Phys. {\bf 35} (1994), 5373--5451, {\tt cond-mat/9411052}.

\bibitem{Blo}
F.\  Bloch: \lq\lq \"Uber die Quantenmechanik der Elektronen in Kristallgittern\rq\rq, Z. f. Physik {\bf 52} (1928), 555-600.

\bibitem{Bl98}
B.\ Blackadar: \lq\lq K-theory for operator algebras\rq\rq, Math. Sci. Res. Inst. Publ., vol. {\bf 5}, Cambridge Univ. Press, Cambridge (1998)

\bibitem{B59} R.\ Bott:  \lq\lq The stable homotopy of the classical groups\rq\rq,  Ann. of Math. (2) {\bf 70} (1959), 313--337.

\bibitem{BCR}
C.\ Bourne, A.L.\ Carey,  A.\  Rennie: \lq\lq The Bulk-Edge Correspondence for the Quantum Hall Effect in Kasparov theory\rq\rq, Lett. Math. Phys.  {\bf 105}, no. 9,  (2015), 1253--1273,  {\tt arxiv:1411.7527}
 
\bibitem{BEM}
P.\ Bouwknegt, J.\  Evslin, V. and V. \ Mathai: T-duality: Topology Change from H-flux. Commun. Math. Phys {\bf 249} 383--415 (2004),
{[{\tt arXiv:hep-th/0306062}]}.

\bibitem{BHM06}
P.\  Bouwknegt, K.C.\  Hannabuss, and V.\  Mathai: \lq\lq Nonassociative Tori and Applications to T-Duality\rq\rq, Commun. Math. Phys. {\bf 264}, (2006), 41--69.

\bibitem{BHM10}
P.\  Bouwknegt, K.C.\  Hannabuss, and V.\  Mathai: \lq\lq C$^*$-algebras in tensor categories\rq\rq, Clay Mathematics Proceedings  {\bf 12}, (2010), 127--165.


\bibitem{BG02}
S.T.\ Bramwell and M.J.P.\ Gingras, \lq\lq Spin ice state in frustrated magnetic pyrocholre materials\rq\rq, {\tt arXiv:cond-mat/0201427v1}.

\bibitem{B1}
T.H.\  Buscher: \lq\lq A symmetry of the string background field equations\rq\rq, Phys. Lett. B {\bf 194} (1987), 60--62.

\bibitem{B2}
T.H.\  Buscher: \lq\lq Path-integral derivation of quantum duality in nonlinear sigma-models\rq\rq, Phys. Lett. B {\bf 201} (1988) 466--472.

\bibitem{C66}
P.\ Cartier: \lq\lq Quantum mechanical  commutation relations and Theta functions\rq\rq, in Algebraic groups and discontinuous subgroups,  Proc. Symposia Pure Math., Amer. Math. Soc., Providence, R.I.  1966.

\bibitem{C08} 
C.\ Castelnovo, R.\ Moessner, and S. L.\ Sondhi: \lq\lq Magnetic monopoles in spin ice\rq\rq,  {\it Nature} {\bf 49} (2008), 42--45.

\bibitem{C81}
A.\  Connes: \lq\lq An analogue of the Thom isomorphism for crossed products of a $C^*$-algebra by an action of $\mathbb{R}$\rq\rq, Adv. Math. {\bf 39}(1) (1981), 31--55.

\bibitem{C85}
A.\ Connes: \lq\lq Non-commutative differential geometry\rq\rq,  Publ. Math. Inst. Hautes \'{E}tude Sci. {\bf 62}(1) (1985), 41--144.

\bibitem{C94}
A.\ Connes: \lq\lq Noncommutative Geometry\rq\rq,  Acad. Press, San Diego, 1994.

\bibitem{CH}
R.\  Courant and D.\  Hilbert: \lq\lq Methoden der mathematischen Physik I\rq\rq, Springer Verlag, 1931, reprint.

\bibitem{CMR}
J.\  Cuntz, R.\  Meyer, and J.\  Rosenberg: \lq\lq Topological and bivariant K-theory\rq\rq, Oberwolfach Seminar {\bf 36} Birkh\"auser 2007.

\bibitem{Di} P.A.M.\ Dirac:  \lq\lq Quantized Singularities in the Electromagnetic Field\rq\rq, Proceedings of the Royal Society A. {\bf 133}: 60--72 (1931). 

\bibitem{DD63}
J.\  Dixmier and A.\  Douady:  \lq\lq Champs continus d'espaces hilbertiens et de C$^*$-alg\`ebres\rq\rq,
Bull. Soc. Math. France, {\bf 91}, (1963), 227--284.

\bibitem{D62}
F.J.\  Dyson: \lq\lq The Threefold Way. Algebraic Structure of Symmetry Groups and Ensembles in Quantum Mechanics\rq\rq,
J. Mathematical Phys. {\bf 3} (1962), 1199--1215.

\bibitem{D70}
F. J.\  Dyson: \lq\lq ‘Correlations between eigenvalues of a random matrix\rq\rq, Commun. Math. Phys. {\bf 19} (1970), 235--250.

\bibitem{FM}
D.S.\ Freed, G.W.\  Moore: \lq\lq Twisted equivariant matter\rq\rq,  Ann. Henri Poincar{\'e} {\bf 14} (2013), 1927--2023. 

\bibitem{G}
P.\ Green: \lq\lq The structure of imprimitivity algebras\rq\rq, J. Funct. Anal. {\bf 36} (1980), 88--104.

\bibitem{HMT16}
K.C.\ Hannabuss, V.\ Mathai, and  G.C.\ Thiang: \lq\lq T-duality simplifies bulk-boundary correspondence: the parametrised case\rq\rq,
Adv.Theor.Math. Phys. {\bf 20} no. 5 (2016), 1193--1226.

\bibitem{HMT17}
K.C.\ Hannabuss, V.\ Mathai, and  G.C.\ Thiang: \lq\lq T-duality simplifies bulk-boundary correspondence: the noncommutative case\rq\rq,
{\tt arXiv}:1603.00116.

\bibitem{HK}
M.Z.\  Hasan and C.L.\  Kane: {\it Colloquium:} Topological Insulators, Rev. Mod. Phys. {\bf 82}  (2010) 3045--3067.

\bibitem{HT}
C. M.\  Hull and P. K.\  Townsend:  \lq\lq String effective actions from sigma-model conformal anomalies\rq\rq, Nuc1ear Phys. B {\bf 301} (1988), 197--223.

\bibitem{JH10}
L.D.C.\ Jaubert and P.C.W.\ Holdsworth: \lq\lq Magnetic monopole dynamoics in spin ice\rq\rq, {\tt arXiv:1010.0970v1}.

\bibitem{K}
M.\  Karoubi: \lq\lq Alg\`ebres de Clifford et K-th\'eorie\rq\rq, Annales scientifiques de l'\'E.N.S. 4e. s\'erie. {\bf 1} (1968), 161-270.

\bibitem{KK88}
G.G.\ Kasparov: \lq\lq Equivariant KK-theory and the Novikov conjecture\rq\rq, Invent. Math. {\bf 91} 147--201 (1988).

\bibitem{KRS}
J.\ Kellendonk, T.\ Richter,  H.\ Schulz-Baldes: \lq\lq Edge current channels and Chern numbers in the integer quantum Hall effect\rq\rq, Rev. Math. Phys. {\bf 14}(1)  (2002), 87--119.

\bibitem{KS1}
J.\ Kellendonk, T.\ Richter,  H.\ Schulz-Baldes: \lq\lq Quantization of edge currents for continuous magnetic operators\rq\rq, J. Funct. Anal. {\bf 209}(2) (2004), 388--413.

\bibitem{KS2}
J.\ Kellendonk, T.\ Richter,  H.\ Schulz-Baldes: \lq\lq  Boundary Maps for $C^*$-Crossed Products with with an Application to the Quantum Hall Effect\rq\rq, Commun. Math. Phys. {\bf 249}(3) (2004), 611--637.

\bibitem{KG}
R.\  Kennedy and C.\  Guggenheim: \lq\lq Homotopy theory of strong and weak topological insulators\rq\rq, Phys. Rev. B {\bf 91} 245148 (2015).

\bibitem{KZ}
R.\ Kennedy and M.R.\  Zirnbauer: \lq\lq Bott periodicity for $\integer_2$ ground states of gapped free fermion systems\rq\rq, Comm. Math. Phys. {\bf 342} (2016), 909--963.

\bibitem{KS}
W.\  Kirsch and B.\  Simon: \lq\lq Comparison theorems for the gap of Schr\"odinger operators\rq\rq, J. Funct. Anal. {\bf 75} (1987) 396--410.

\bibitem{K09}
A.\ Kitaev: \lq\lq Periodic table for topological insulators and superconductors\rq\rq, In: AIP Conf. Proc., vol. 1134, no. 1, (2009),  22--30.

\bibitem{KDP}
K.v.\  Klitzing, G.\ Dorda, M.\ Pepper: \lq\lq New method for high-accuracy determination of the fine-structure constant based on quantized Hall resistance\rq\rq,  
Phys. Rev. Lett. {\bf 45}  (1980), 494--497.


\bibitem{KS17}
J.\  Kruthoff, J.\ de Boer, J.\ van Wezel, C. L.\ Kane, and R-J.\ Slager:
Topological classification of crystalline insulators through band structure combinatorics, {\tt arXiv:1612.02007}


\bibitem{L81}
R.B.\ Laughlin:  \lq\lq Quantized Hall conductivity in two dimensions\rq\rq,  Phys. Rev. B. {\bf 23} (10) (1981) 5632--5633.

\bibitem{LV}
G.\ Lion and M.\ Vergne: \lq\lq The Weil representation, Maslov index, and Theta series\rq\rq, Progress in Mathematics {\bf 6}
Birkh\"auser, Boston, 1980.  

\bibitem{M58}
G.W.\ Mackey: \lq\lq Unitary representations of group extensions I\rq\rq,  Acta Mathematica {\bf 99} 265--311 (1958).


\bibitem{M70}
G.W.\ Mackey: \lq\lq Induced representations of locally compact groups and applications\rq\rq,  in \lq\lq Functional Analysis and Related Fields\rq\rq, ed. F.E.\ Browder, pp. 132--166, Springer Verlag, Heidelberg, 1970.

\bibitem{MMP}
N.\ Macris, P.A.\   Martin,  and J.V.\ Pul\'e: \lq\lq On edge states in semi-infinite quantum Hall systems\rq\rq,
J. Phys. A: Math. Gen. {\bf 32} (1999) 1985–1996. 

\bibitem{MR}
V.\ Mathai, J.\  Rosenberg: \lq\lq T-duality for torus bundles with H-fluxes via noncommutative topology\rq\rq, Commun. Math. Phys. {\bf 253}  (2005), 705--721. 

\bibitem{MT1}
V.\ Mathai and G.C.\  Thiang: \lq\lq T-duality and topological insulators\rq\rq, J. Phys. A: Math. Theor. (Fast Track Communications) {\bf 48} no.42, 42FT02, 10pp  (2015), [{\tt arXiv:1503.01206}].

\bibitem{MT3}
 V.\ Mathai and G.C.\  Thiang: \lq\lq T-duality simplifies bulk-boundary correspondence: some higher dimensional cases\rq\rq, Ann. Henri Poincar\'e {\bf 17}  3399--3424 (2016). 

\bibitem{MT4}
V.\ Mathai and G.C.\  Thiang: \lq\lq T-duality simplifies bulk-boundary correspondence\rq\rq, Comm. Math. Phys. {\bf 345}  675--701, (2016)

\bibitem{MT5}
V.\ Mathai and G.C.\  Thiang: \lq\lq Global topology of Weyl semimetals and Fermi arcs\rq\rq,  J.Phys.A: Math.Theor.(Letter) vol.50 11LT01 (2017). 

\bibitem{MT6}
V.\ Mathai and G.C.\  Thiang: \lq\lq Differential topology of semimetals\rq\rq, Comm. Math. Phys. {\bf 355} 561-602,(2017).

\bibitem{Mi} 
J.\ Milnor: \lq\lq Morse theory\rq\rq, Based on lecture notes by M.\  Spivak and R.\  Wells,  Annals of Mathematics Studies, No. {\bf 51} Princeton University Press, Princeton, N.J. (1963). 

\bibitem{MW}
G.\ Moore and E.\ Witten: \lq\lq Self-duality, Ramond-Ramond fields and K-theory\rq\rq, J. High Energy Phys. (2000) no. 5, Paper 32, 32 pp.

\bibitem{NTW}
Q.\  Niu, D.\  Thouless, and Y.-S.\  Wu: \lq\lq Quantized Hall conductance as a topological invariant\rq\rq, Phys. Rev. B {\bf 31} (1985), 3372--3377.

\bibitem{Pa}
W.L.\ Paschke: \lq\lq On the mapping torus of an automorphism\rq\rq, Proc. Amer. Math. Soc. {\bf 88} (1983) 481--485.

\bibitem{P35}
 L.\ Pauling:  \lq\lq The Structure and Entropy of Ice and of Other Crystals with Some Randomness of Atomic Arrangement\rq\rq  Journal of the American Chemical Society. {\bf 57} (12): 2680--2684 (1935).. 

\bibitem{PV}
M.\ Pimsner and D.\  Voiculescu: \lq\lq Exact sequences for $K$-groups and $EXT$-groups of certain cross-product $C^*$-algebras\rq\rq, J. Operator Theory {\bf 4} 93--118 (1980).

\bibitem{PS}
E.\ Prodan, and H.\ Schulz--Baldes: \lq\lq Bulk and Boundary Invariants for Complex Topological Insulators: From K-Theory to Physics\rq\rq,
Mathematical Physics Studies,  Springer Verlag, 2016. 

\bibitem{RR}
I.\ Raeburn and  J.\ Rosenberg: \lq\lq Crossed products of continuous-trace C∗-algebras by smooth actions\rq\rq, Trans. Amer. Math. Soc. {\bf 305} (1988) 1--45.

\bibitem{RW} 
I.\ Raeburn and D.P.\  Williams: 
\lq\lq Morita equivalence and continuous-trace C$^*$-algebra\rq\rq, Mathematical Surveys and Monographs, {\bf 60} American Mathematical Society, Providence, RI, 1998.

\bibitem{RS}
M.\  Reed and B.\  Simon:  \lq\lq Methods of Modern Mathematical Physics\rq\rq, Vol IV,  \lq\lq Analysis of operators\rq\rq, Academic Press [Harcourt Brace Jovanovich, Publishers], New York-London, 1978.

\bibitem{Ri}
M..A.\ Rieffel: \lq\lq Connes' analogue for crossed products of the Thom isomorphism\rq\rq, Operator algebras and K-theory (San Francisco, Calif., 1981), pp. 143–154, Contemp. Math., {\bf 10}, Amer. Math. Soc., Providence, R.I., 1982. 

\bibitem{RZV}
Y.\  Ran, Y.\  Zhang, and A.\  Vishwanath: \lq\lq One-dimensional topologically protected modes in topological insulators with lattice dislocations\rq\rq, 
Nature Physics {\bf 5} (2009) 298--303.

\bibitem{SKR}
H.\  Schulz-Baldes, J.\  Kellendonk and T.\  Richter: \lq\lq Simultaneous quantization of edge and bulk Hall conductivity\rq\rq,
J. Phys. A: Math. Gen.{\bf 33}  (2000), L27–L32. 

\bibitem{SCR} 
M.\  Stone, C-K.\  Chiu, and A.\ Roy: \lq\lq Symmetries, dimensions topological insulators: the mechanism behind the face of the Bott clock\rq\rq, 
J. Phys. A {\bf 44} (2011) 0405001.

\bibitem{T74}
H.\ Takai: \lq\lq On duality for crossed products of C$^*$-algebras\rq\rq, J. Functional Analysis {\bf 19} (1975), 25--39.

\bibitem{T67}
M.\ Takesaki: \lq\lq Covariant representations of C$^*$-algebras and their locally compact automorphism groups\rq\rq,
Acta Math. {\bf 119} (1967), 273-303.

\bibitem{TKNN}
D.\ Thouless, M.\  Kohmoto, M.\  Nightingale, and M.\ den Nijs: \lq\lq Quantized Hall Conductance in a Two-Dimensional
Periodic Potential\rq\rq,  Phys. Rev. Lett. {\bf 49}, (1982), 405--408.

\bibitem{vNW}  J. v.\  Neumann and E.\  Wigner: \lq\lq  Zur Erkl\"arung einiger Eigenschaften der Spektren aus der Quantenmechanik
des Drehelektrons\rq\rq, Z. f. Physik A Hadrons and Nuclei {\bf 49} (1928) 73--94.

\bibitem{WS16}
C.\ Wang and T.\ Senthil, 
\lq\lq Time-reversal symmetric U(1) quantum spin liquids\rq\rq, Phys. Rev. X {\bf 6} 011034-1--24 (2016).

\bibitem{WO} 
N.E.\ Wegge-Olssen: \lq\lq K-theory and C$^*$-algebras\rq\rq, Clarendon Press, Oxford University Press, New York, 1993.

\bibitem{W28}
H.\ Weyl: \lq\lq Gruppentheorie und Quantenmechanik\rq\rq,  Hirzel, Leipzig, 1928,
English translation: \lq\lq The theory of groups and quantum mechanics\rq\rq, Dover Publications Inc., New York, 1950.

\bibitem{W27}
E. P.\  Wigner: \lq\lq Einige Folgerungen der Schr\"odingersche Theorie f\"ur die Termstrukturen\rq\rq, Z. f.\  Physik, {\bf 43} (1927), 624--652.

\bibitem{W55}
E. P.\  Wigner:
\lq\lq Characteristic vectors of bordered matrices with infinite dimensions\rq\rq,  Ann. Math. {\bf 62} (1955), 548--564.

\bibitem{W57}
E. P.\ Wigner: \lq\lq Characteristic vectors of bordered matrices with infinite dimensions II\rq\rq,  Ann. Math.  {\bf 65} (1957), 203--207. 

\bibitem{W58}
E. P.\  Wigner: \lq\lq On the distribution of the roots of certain symmetric matrices\rq\rq,  Ann. Math. {\bf 67}, (1958) 325-327.

\bibitem{W59}
E. P. Wigner: \lq\lq Gruppentheorie und ihre Anwendung in der Theorie der Atomspektren\rq\rq\ Vieweg Verlag, Braunschweig 1931,
English edition: \lq\lq Group Theory and its Application to the Quantum Mechanics of Atomic
Spectra\rq\rq Academic Press Inc., New York, 1959.

\bibitem{WZ}
Y-S.\ Wu and A.\ Zee: \lq\lq Cocycles and magnetic monopole\rq\rq, Phys. Lett. {\bf 152} (1985), 98--102.


\bibitem{Z64}
J.\  Zak: \lq\lq Magnetic Translation Group\rq\rq, Phys. Rev. {\bf 134},  A1602 – 06 (1964).

\bibitem{Z96}
M.R.\  Zirnbauer: \lq\lq Riemannian symmetric superspaces and their origin in random matrix theory\rq\rq,
J. Math. Phys. {\bf 37} (1996) 4986--5018.


\end{thebibliography}
\end{document}